\def\thepapertitle{Towards univalent reference types}
\title{\thepapertitle}
\author{Jonathan Sterling}{University of Cambridge}{js2878@cl.cam.ac.uk}{https://orcid.org/0000-0002-0585-5564}{Jonathan Sterling was funded in part by the European Union under the Marie Sk\l{}odowska-Curie Actions Postdoctoral Fellowship project \href{https://doi.org/10.3030/101065303}{\emph{TypeSynth: synthetic methods in program verification}}, and in part by AFOSR  under grant FA9550-23-1-0728, \href{http://www.jonmsterling.com/jms-008K.xml}{\emph{New Spaces for Denotational Semantics}} (Tristan Nguyen, program manager). Views and opinions expressed are however those of the authors only and do not necessarily reflect those of the European Union, the European Commission, nor AFOSR. Neither the European Union nor the granting authority nor AFOSR can be held responsible for them.}
\author{Daniel Gratzer}{Aarhus University}{gratzer@cs.au.dk}{https://orcid.org/0000-0003-1944-0789}{}
\author{Lars Birkedal}{Aarhus University}{birkedal@cs.au.dk}{https://orcid.org/0000-0003-1320-0098}{}
\authorrunning{J.~Sterling, D.~Gratzer, and L.~Birkedal}
\titlerunning{\thepapertitle}
\keywords{univalent foundations, homotopy type theory, impredicative encodings, synthetic guarded domain theory, guarded recursion, higher-order store, reference types}
\begin{document}

\maketitle

\begin{abstract}
    We develop a denotational semantics for general reference types in an impredicative version of \DefEmph{guarded homotopy type theory}, an adaptation of synthetic guarded domain theory to Voevodsky's univalent foundations. We observe for the first time the profound impact of univalence on the denotational semantics of mutable state. Univalence automatically ensures that all computations are invariant under symmetries of the heap---a bountiful source of program equivalences. In particular, even the most simplistic univalent model enjoys many new equations that do not hold when the same constructions are carried out in the universes of traditional set-level (extensional) type theory.
\end{abstract}

\begin{xsect}{Introduction}

  Moggi~\cite{moggi:1991} famously distinguished three semantics-based approaches to proving equivalences between programs: \DefEmph{operational}, \DefEmph{denotational}, and \DefEmph{logical}. Operational semantics studies programs \emph{indirectly} by investigating the properties of a {transition function} that executes programs \emph{qua} code on a highly specific idealized computer; in contrast, denotational semantics views programs \emph{directly} as functions on highly specialized kinds of spaces, without making any detour through transition functions. Moggi's departure is to advance a \emph{logical} approach to program equivalence, in which a programming language is an equational theory equipped with a \emph{category} of denotational models for which it is both sound and complete.

  Moggi's logical  approach to program equivalence therefore subsumes traditional denotational semantics: both the general and the particular necessarily exude their own depth and sophistication, but they are now correctly situated in relation to each other so that workers in semantics can reap the greatest benefits from the \emph{theory--model} dialectic:
  \begin{enumerate}
    \item Even if there is a distinguished ``standard'' model of a given programming language (\eg the Scott model of PCF), any non-trivial investigation of the syntax of that language necessarily involves non-standard models --- if only because induction can always be seen as a model construction. These non-standard models include the \emph{generic} model, built from the theory itself, as well as models based on logical relations; thus the need for clear thinking about many models via logical semantics cannot be bypassed.
    \item Conversely, the discovery of a new model of a programming language can inspire and justify the refinement of its equational theory: for instance, \emph{parametric models} have been used to justify equational theories for data abstraction and local store. In the other direction, the discovery of a ``non-model'' that nonetheless has desirable properties can open up new semantic vistas by motivating a relaxed equational theory.
  \end{enumerate}

  \begin{xsect}{State and reference types: static and dynamic allocation}

    One of the oldest programming constructs is \emph{state}: the ability to read from and write to the computer's memory as a side effect. Theories of state delineate themselves along two axes: (1) the kinds of data that can be stored, and (2) the kinds of allocations allowed. On the first axis, languages range from being able to store integers and strings (\DefEmph{first-order store}) all the way to being able to store elements of arbitrary types, including closures (\DefEmph{higher-order store}). On the second axis, we have \DefEmph{static allocation} on one end, where the type of a function specifies exactly what kind of state it uses, and \DefEmph{dynamic allocation} on the other end, where the types and quantity of memory cells allocated are revealed only during execution. Under dynamic allocation, one has \DefEmph{reference types} whose elements are \emph{pointers} to memory cells storing elements of a given type.

  \end{xsect}

  \begin{xsect}{Equational theories of dynamic storage: between local and global}

    The semantics of state are only difficult under dynamic allocation; indeed, computations that interact with a statically known heap configuration $\overline{\ell_i:\sigma_i}$ can be classified by Moggi's state monad $\sigma\to \sigma\times -$ where $\sigma:\equiv\Prod{i}{\sigma_i}$, and it is reasonable to \emph{define} the equational theory of static allocation by means of this interpretation. The equational theory of \emph{dynamic} storage is by contrast far from solidified: the introduction of dynamic allocation opens up a spectrum of abstraction between what may be called \DefEmph{local store} and \DefEmph{global store}.

    Global store is the least abstract theory of dynamic allocation: in a model of global store, it is permitted that allocations be globally observable regardless of their impact on the results of computations. For instance, global store models are allowed to distinguish the program $\prn{\ell\leftarrow \TmAlloc\,\texttt{``hello''}; \TmRet\,10}$ from the simpler program $\TmRet\,10$. By contrast, models of \emph{local store} validate equations resembling an idealized garbage collector, in which the heap is only observable through its abstract read/write interface; in a model of local store, we necessarily have $\prn{\ell\leftarrow\TmAlloc\,\texttt{``hello''}; \TmRet\,10} = \TmRet\,10$ as well as many other equations.

    The abstraction offered by local store is highly desirable. Moreover,  Staton~\cite{staton:2010} has shown that Plotkin and Power's algebraic theory of \emph{first-order} local store~\cite{plotkin-power:2002} is complete in the extremely strong sense that it derives any consistent equation. Beyond first-order references, the very definition of the local store theory becomes less clear, and so a landscape of intermediate theories has emerged in the search for well-behaved models. For example, Kammar~\etal~\cite{kammar-levy-moss-staton:2017} have constructed a compelling model of local \DefEmph{full ground store}, going beyond first-order store by allowing pointers to pointers. On the other hand, Levy~\cite{levy:2002,levy:2003,levy:2003:book} has given a domain theoretic model of the \emph{global} allocation theory of higher-order store.
  \end{xsect}

  \begin{xsect}{Semantic worlds and guarded models of higher-order store}

    Denotational models of full dynamic allocation, such as those of Plotkin and Power~\cite{plotkin-power:2002}, Levy~\cite{levy:2002}, and Kammar~\etal~\cite{kammar-levy-moss-staton:2017}, tend to share an important limitation: in the model, a semantic program can only allocate a memory cell with a \emph{syntactic} type. This restriction is quite unnatural and impractical in the context of higher-order store, where many important program equivalences actually follow from the presence of exotic semantic types lying outside the image of the interpretation function (\eg in relational models \`a la Girard and Reynolds).

    The search for models of general references closed under allocation of cells with \emph{semantic} types has been major motivation of current work in \emph{guarded domain theory}, expressed in operational semantics by \emph{step-indexing}~\cite{appel-mcallester:2001,ahmed:2004} and in denotational semantics by means of various generalizations of metric space~\cite{arnold-nivat:1980,america-rutten:1987,breugel-warmerdam:1994,bmss:2011}.
    The problem solved by guarded domain theory is the following famous circularity described in several prior works~\cite{ahmed:2004,appel-mellies-richards-vouillon:2007,bmss:2011}:

    \begin{enumerate}
      \item A semantic type needs to be some kind of covariant family of predomains indexed in the possible configurations of the heap (``worlds''); a single predomain won't do, because the elements of type $\TpRef\,{\sigma}$ vary depending on what cells have been allocated.
      \item A semantic world should be a finite mapping from memory locations to semantic types.
    \end{enumerate}

    Guarded domain theory approximates a solution to the domain equation evoked above by decreasing precision at every recursive occurrence. Although it may be possible to find a fully precise solution to this domain equation using traditional domain theory, Birkedal~\etal~\cite[\S5]{birkedal-stovring-thamsborg:2010} have presented evidence that such a fixed point will \emph{not} brook the interpretation of reference types by a continuous function on the domain of all types, ruling out semantics for recursive types. Thus guarded domain theory or step-indexing would seem to be mandatory for \emph{functional} models of general reference types with semantic worlds.\footnote{A non-functional approach to compositionality for reference types would be the expression of Reynolds' \emph{capability interpretation} of references~\cite{reynolds:1977} in game semantics by Abramsky, Honda, and McCusker~\cite{abramsky-honda-mccusker:1998}.}

    Models of guarded domain theory can be embedded into topoi whose internal language is referred to as \DefEmph{synthetic guarded domain theory} or \emph{SGDT}; the most famous of these topoi is the \DefEmph{topos of trees}~\cite{bmss:2011} given by presheaves on $\omega$.
    The idea of using synthetic guarded domain theory as a setting for the na\"ive denotational semantics of programming languages with general recursion was first explored by Paviotti, M\o{}gelberg, and Birkedal~\cite{paviotti-mogelberg-birkedal:2015,mogelberg-paviotti:2016,paviotti:2016}.

    Sterling, Gratzer, and Birkedal~\cite{sterling-gratzer-birkedal:2022} have recently extended the program of Paviotti~\etal to the general case of full higher-order store with polymorphism and recursive types: in particular, \opcit have shown how to model general reference types in synthetic guarded domain theory assuming an impredicative universe (as can be found in realizability models~\cite{hyland:1982,hyland-robinson-rosolini:1990}). This model is the starting point of the present paper: by adapting the construction of Sterling~\etal to the setting of univalent foundations, we obtain a new suite of equational reasoning principles that we refer to as the \emph{theory of univalent reference types}.
  \end{xsect}

  \begin{xsect}[sec:intro:univalent-reference-types]{Univalent reference types and data abstraction in the heap}

    The thesis of this paper is that Voevodsky's \emph{univalence} principle leads to simpler models of general reference types that nonetheless validate extraordinarily strong equations between stateful programs. To examine this claim, we consider the type of object-oriented counters in a Haskell-like language:

    \iblock{
      \mrow{
        \TpCounter :\equiv \brc{\Con{incr} : \TpT\,\prn{}; \Con{read} :\TpT\,\TpInt}
      }
    }

    The most obvious implementation of the $\TpCounter$ interface simply allocates an integer and increments it in memory as follows:

    \iblock{
      \mrow{
        \Con{posCounter} : \TpT\,\TpCounter
      }
      \mrow{
        \Con{posCounter} :\equiv
        \ell\gets \TmAlloc\,0;
        \TmRet\,
        \brc{
          \Con{incr}\hookrightarrow i\gets \TmGet\,\ell; \TmSet\,\ell\,\prn{i+1},
          \Con{read}\hookrightarrow \TmGet\,\ell
        }
      }
    }

    Another implementation might count \emph{backwards} and then negate the stored value on $\Con{read}$ using the functorial action $\IOMap$ of $\TpT$ on $\Con{neg} : \TpInt\to\TpInt$:

    \iblock{
      \mrow{
        \Con{negCounter} :\equiv
        \ell\gets \TmAlloc\,0;
        \TmRet\,
        \brc{
          \Con{incr}\hookrightarrow i\gets \TmGet\,\ell; \TmSet\,\ell\,\prn{i-1},
          \Con{read}\hookrightarrow \IOMap\,\Con{neg}\, \prn{\TmGet\,\ell}
        }
      }
    }

    By intuition, the $\Con{posCounter}$ and $\Con{negCounter}$ implementations of the counter interface should be ``observationally equivalent'' in the sense that no context of ground type should be able to distinguish them: indeed, even though $\Con{negCounter}$ is writing negative numbers to the heap instead of positive numbers, the only way a context can observe the allocated cell is using the $\Con{read}$ method. The observational equivalence $\Con{posCounter} \simeq \Con{negCounter}$ is typically proved using a \emph{relational model}, as both Birkedal~\etal~\cite[\S6.3]{birkedal-stovring-thamsborg:2010} and Sterling~\etal~\cite{sterling-gratzer-birkedal:2022} did.

    Observational equivalence is not the same as equality, neither in syntax nor semantics. Indeed, typical equational theories of (local or global) dynamic allocation do \emph{not} derive the equation $\vdash \Con{posCounter} \equiv \Con{negCounter}$, as can be seen easily by means of a countermodel: we have $\bbrk{\Con{posCounter}}\not=\bbrk{\Con{negCounter}}$ in both the relational models of \opcit, although it is true that $\bbrk{\Con{posCounter}} \mathrel{R\Sub{\TpT\,\Con{Counter}}}\bbrk{\Con{negCounter}}$ holds. The observational equivalence $\Con{posCounter}\simeq\Con{negCounter}$ is deduced in the relational models because the relation on \emph{observations} is discrete.

    What distinguishes our \DefEmph{univalent reference types} from ordinary reference types is that the former actually derive equations like $\vdash\Con{posCounter}\equiv\Con{negCounter}$, as shown in \cref{thm:counter}. We  substantiate this equational theory by constructing a model (\cref{thm:the-model}) in univalent foundations~\cite{hottbook} in which the equation $\vdash\Con{posCounter}\equiv\Con{negCounter}$ follows immediately from the univalence principle of the metalanguage. Although it may be possible to validate this equation using non-standard parametric models (or less scrupulously by an extensional collapse), our contribution is to show that it also holds in a ``standard'' model, provided that this standard model is constructed in a univalent metatheory.
  \end{xsect}
\end{xsect}

\begin{xsect}[sec:language]{A higher-order language with (univalent) reference types}
  We begin by giving a description of the syntax and the equational theory of a simple language with
  references. The language is meant to be \emph{as simple as possible, but no simpler}. In
  particular, it contains several problematic constructs (higher-order store, dynamic allocations,
  \etc{}) that have been historically difficult to model in denotational semantics.

\begin{figure}[t]
  \begin{grammar}
    types & \tau,\sigma & \sigma \to \tau \mid \TpT\,\tau \mid \TpRef\,\tau\mid \ldots
    \\
    terms & e,e' &
    x
    \mid \TmRecIn{f}{x}{e}
    \mid \TmRet\,e \mid x \gets e;\,e'
    \mid \TmAlloc\,e \mid \TmGet\,e \mid \TmSet\,e\,e'
    \mid \Alert{\TmStep}
    \mid \ldots
  \end{grammar}

  \begin{mathpar}
    \inferrule{
      \Gamma\vdash e : \sigma
    }{
      \Gamma\vdash \TmAlloc\,e : \TpT\,\prn{\TpRef\,\sigma}
    }
    \and
    \inferrule{
      \Gamma\vdash e : \TpRef\,\sigma
    }{
      \Gamma \vdash \TmGet\,e : \TpT\,\sigma
    }
    \and
    \inferrule{
      \Gamma\vdash e : \TpRef\,\sigma\\
      \Gamma\vdash e' : \sigma
    }{
      \Gamma\vdash \TmSet\,e\,e' : \TpT\,\TpUnit
    }
    \and
    \inferrule{
    }{
      \Gamma\vdash \Alert{\TmStep} : \TpT\,\TpUnit
    }
    \and
    \inferrule{
      \Gamma,f : \sigma\to \TpT\,\tau, x : \sigma \vdash e : \TpT\,\tau
    }{
      \Gamma\vdash \TmRecIn{f}{x}{e} : \sigma \to \TpT\,\tau
    }
    \and
    \inferrule{
      \Gamma,f : \sigma\to \TpT\,\tau, x : \sigma\vdash e : \TpT\,\tau
      \\
      \Gamma\vdash e' : \sigma
    }{
      \Gamma\vdash \prn{\TmRecIn{f}{x}{e}}\,e' \equiv
      \Alert{\TmStep};
      \brk{\prn{\TmRecIn{f}{x}{e}}/f, e'/x}e
      : \TpT\,\tau
    }
  \end{mathpar}

  \begin{adjustwidth}{-1.5cm}{-1.5cm}
    \begin{align*}
      e:\TpRef\,\sigma, e' : \sigma
       & \vdash
      \TmSet\,e\,e'; \TmGet\,e \equiv \Alert{\TmStep}; \TmSet\,e\,e'; \TmRet\,e' : \TpT\,\sigma
      \\
      e : \sigma, e' : \sigma
       & \vdash
      \prn{x\gets\TmAlloc\,e; \TmSet\,x\,e'; \TmRet\,x} \equiv \TmAlloc\,e' : \TpT\,\prn{\TpRef\,\sigma}
      \\
      e : \TpRef\,\sigma, e' : \sigma, e'' : \sigma
       & \vdash
      \TmSet\,e\,e'; \TmSet\,e\,e'' \equiv \TmSet\,e\,e'' : \TpT\,\TpUnit
      \\
      e : \TpRef\,\sigma, e' : \TpRef\,\tau
       & \vdash
      \prn{x\gets \TmGet\,e; y \gets \TmGet\,e'; \TmRet\,\gl{x,y}} \equiv
      \prn{y\gets\TmGet\,e'; x\gets\TmGet\,e; \TmRet\,\gl{x,y}}
      : \TpT\,\prn{\sigma\times \tau}
      \\
      e : \TpRef\,\sigma
       & \vdash
      \prn{x\gets \TmGet\,e; \TmSet\,e\,x; \TmRet\,x} \equiv \TmGet\,e : \TpT\,\sigma
      \\
      e : \TpRef\,\sigma, e' : \TpT\,\tau
       & \vdash
      {\TmGet\,e; e'} \equiv \Alert{\TmStep}; e'
      \\
    \end{align*}
  \end{adjustwidth}

  \caption{Syntax and selected typing and equational rules for a higher-order monadic language with general reference types. We assume standard notational conventions for monadic programming, \eg writing $e;e'$ for $\_\gets e; e'$. We assume the standard $\beta/\eta$-equational theory of function and product types, as well as the monadic laws. We also assume that $\TmStep$ lies in the \emph{center}~\cite{carette-lemonnier-zamdzhiev:2023} of the monad $\TpT$, \ie commutes with all monadic operations.}
  \label{fig:rules}
\end{figure}
\begin{figure}[t]
  \begin{adjustwidth}{-1cm}{-1cm}
    \begin{mathpar}
      \inferrule[allocation permutation]{
        \Gamma\vdash e : \sigma\\
        \Gamma\vdash e' : \tau
      }{
        \Gamma\vdash \ell\gets\TmAlloc\,e; \ell'\gets\TmAlloc\,e';\TmRet\,\gl{\ell,\ell'} \equiv
        \ell'\gets\TmAlloc\, e'; \ell\gets\TmAlloc\,e; \TmRet\,\gl{\ell,\ell'} :
        \TpT\,\prn{
          \TpRef\,\sigma\times\TpRef\,\tau
        }
      }
      \and
      \inferrule[representation independence]{
        \Gamma\vdash e : \sigma\\
        \Gamma\vdash f^+ : \sigma\to \tau\\
        \Gamma\vdash f^- : \tau \to \sigma\\\\
        \Gamma,x:\tau\vdash f^+\prn{f^-x} \equiv x : \tau\\
        \Gamma,x:\sigma\vdash f^-\prn{f^+x} \equiv x : \sigma
      }{
        \Gamma
        \vdash
        \ell\gets \TmAlloc\,e; \TmRet\,\gl{\TmGet\,\ell, \TmSet\,\ell}
        \equiv
        \ell\gets \TmAlloc\,\prn{f^+e};
        \TmRet\,\gl{
          \IOMap\,f^-\,\prn{\TmGet\,\ell},
          \TmSet\,\ell\circ f^+
        }
        : \TpT\,\prn{\TpCell\,\sigma}
      }
    \end{mathpar}
  \end{adjustwidth}
  \caption{The equational theory of \DefEmph{univalent reference types}, extending that of \cref{fig:rules}; we define $\TpCell\,\sigma :\equiv \TpT\,\sigma\times\prn{\sigma\to\TpT\,\TpUnit}$ to be the ``abstract interface'' of a reference cell. Here we write $\IOMap\,f : \TpT\,A\to \TpT\,B$ for the functorial action of $\TpT$ on a function $f : A \to B$.}
  \label{fig:univalent}
\end{figure}
 
  \begin{xsect}{The equational theory of monadic general reference types}
    Although there are many different ways to present programming languages with side effects, for the sake of familiarity we have chosen to focus on
    a variant of Moggi's \emph{monadic metalanguage}~\cite{moggi:1991}.\footnote{When developing our denotational semantics in \cref{sec:semantics-in-uf}, we will refine the monadic point of view by passing to an adjoint call-by-push-value resolution of the computational monad~\cite{levy:2003}.} Essentially, this is a
    simply-typed lambda calculus supplemented with a \emph{strong monad} $\TpT$ and further equipped
    with a type of references $\TpRef\,\tau$ along with a suite of effectful operations for interacting with references. Like in Haskell, all side effects are confined to the monad; unlike Haskell, general recursion is treated as a side effect.

    One non-standard aspect of our language bears special attention, namely the nullary side effect $\Alert{\TmStep} : \TpT\,\TpUnit$. This effect can be thought of as the ``exhaust'' left behind in the equational theory by unfolding any kind of recursively defined construct, including not only the unfolding of recursive functions but also accesses to the heap. In particular, for a given recursive function $g :\equiv \TmRecIn{f}{x}{e}$, we do not have $\vdash g\,e' \equiv \brk{g/f,e'/x}e$ but rather only $\vdash g\,e' \equiv \Alert{\TmStep}; \brk{g/f,e'/x}e$. Likewise, our equational theory does not equate $\vdash \prn{\ell\gets \TmAlloc\,e; \TmGet\,\ell} \equiv \TmRet\,e$ but rather only $\vdash\prn{\ell\gets\TmAlloc\,e;\TmGet\,\ell} \equiv \Alert{\TmStep}; \TmRet\,e$. The presence of $\TmStep$ in our equational theory is forced by the \emph{guarded} denotational semantics that we will later employ in \cref{sec:semantics-in-uf,thm:the-model}.
  \end{xsect}

  \begin{xsect}{The equational theory of univalent reference types}
    The equational theory of \DefEmph{univalent reference types} strengthens \cref{fig:rules} by quotienting under symmetries of the heap, expressed in the two rules depicted in \cref{fig:univalent}.

    \begin{enumerate}
      \item The \textsc{allocation permutation} rule states that the order in which references are allocated does not matter; this is a kind of \emph{nominal symmetry} built into the theory of univalent reference types, expressing that the \emph{layout} of the heap is viewed up to isomorphism.
      \item The \textsc{representation independence} rule states that the \emph{observable interface} of a reference cell is invariant under isomorphisms of that cell's contents.
    \end{enumerate}

    The \textsc{allocation permutation} rule is common to theories of local dynamic allocation, but less common in theories of global dynamic allocation. The \textsc{representation independence} rule is, however, a new feature of univalent reference types that goes beyond existing local theories of dynamic allocation: as we have discussed in \cref{sec:intro:univalent-reference-types}, such a law typically holds up to observational equivalence but almost never ``on the nose'' at higher type. It is therefore worth going into more detail.

    The idea of \textsc{representation independence} is that allocating a cell of type $\sigma$ and then only interacting with it by means of its $\prn{\TmGet,\TmSet}$ methods should be the same as allocating a cell of a different type $\tau$ and interacting with it by conjugating its $\prn{\TmGet,\TmSet}$ interface by an isomorphism $e : \sigma\cong\tau$. In particular, it is allowed that $\sigma \equiv \tau$ and $e:\sigma\cong \sigma$ be nonetheless a non-trivial automorphism: and so we may derive from \textsc{representation independence} our case study involving imperative counters that count forward and backwards.

    \begin{theorem}\label{thm:counter}
      Let $\TpCounter$, $\Con{posCounter}$, and $\Con{negCounter}$ be as in \cref{sec:intro:univalent-reference-types}:

      \iblock{
        \mrow{
          \TpCounter :\equiv \brc{\Con{incr} : \TpT\,\prn{}; \Con{read} :\TpT\,\TpInt}
        }
        \mrow{
          \Con{posCounter} :\equiv
          \ell\gets \TmAlloc\,0;
          \TmRet\,
          \brc{
            \Con{incr}\hookrightarrow i\gets \TmGet\,\ell; \TmSet\,\ell\,\prn{i+1},
            \Con{read}\hookrightarrow \TmGet\,\ell
          }
        }
        \mrow{
          \Con{negCounter} :\equiv
          \ell\gets \TmAlloc\,0;
          \TmRet\,
          \brc{
            \Con{incr}\hookrightarrow i\gets \TmGet\,\ell; \TmSet\,\ell\,\prn{i-1},
            \Con{read}\hookrightarrow
            \IOMap\,\Con{neg}\, \prn{\TmGet\,\ell}
          }
        }
      }

      We may derive $\vdash\Con{posCounter}\equiv\Con{negCounter} : \TpCounter$.
    \end{theorem}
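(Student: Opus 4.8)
The plan is to recognize both counters as the image, under a single ``constructor'' function, of the bare read/write interface of a freshly allocated cell, and then to invoke \textsc{representation independence} with the negation automorphism $\Con{neg} : \TpInt\to\TpInt$. Concretely, I would first introduce the function
\[
  \Con{mkCounter} : \TpCell\,\TpInt\to\TpCounter,
  \qquad
  \Con{mkCounter}\,\gl{g,s} :\equiv
  \brc{\Con{incr}\hookrightarrow i\gets g; s\,\prn{i+1},\ \Con{read}\hookrightarrow g}
\]
that turns an abstract $\prn{\TmGet,\TmSet}$-interface into a counter. Using the associativity and left-unit monad laws, one checks that $\Con{posCounter}$ is the functorial image $\IOMap\,\Con{mkCounter}\,\prn{\ell\gets\TmAlloc\,0; \TmRet\,\gl{\TmGet\,\ell,\TmSet\,\ell}}$ of the left-hand side of the \textsc{representation independence} rule.

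Next I would instantiate \textsc{representation independence} at $\sigma\equiv\tau\equiv\TpInt$, $e\equiv 0$, and $f^+\equiv f^-\equiv\Con{neg}$. Since $\Con{neg}$ is an involution, i.e.\ $\Con{neg}\,\prn{\Con{neg}\,x}\equiv x$, the hypotheses of the rule are discharged; and since $\Con{neg}\,0\equiv 0$, its conclusion reads
\[
  \ell\gets\TmAlloc\,0; \TmRet\,\gl{\TmGet\,\ell,\TmSet\,\ell}
  \equiv
  \ell\gets\TmAlloc\,0; \TmRet\,\gl{\IOMap\,\Con{neg}\,\prn{\TmGet\,\ell},\ \TmSet\,\ell\circ\Con{neg}}.
\]
Applying $\IOMap\,\Con{mkCounter}$ to both sides and using congruence, $\Con{posCounter}$ is equated to $\IOMap\,\Con{mkCounter}$ of the conjugated interface.

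It remains to show that this last term reduces to $\Con{negCounter}$ on the nose. Running the same monad-law calculation backwards collapses it to $\ell\gets\TmAlloc\,0; \TmRet\,\prn{\Con{mkCounter}\,\gl{g',s'}}$ with $g'\equiv\IOMap\,\Con{neg}\,\prn{\TmGet\,\ell}$ and $s'\equiv\TmSet\,\ell\circ\Con{neg}$, so I only need to compare the two methods of $\Con{mkCounter}\,\gl{g',s'}$ against those of $\Con{negCounter}$. The $\Con{read}$ method is $g'\equiv\IOMap\,\Con{neg}\,\prn{\TmGet\,\ell}$, which is syntactically that of $\Con{negCounter}$. For $\Con{incr}$, pushing the functorial action through the bind (the derived law $\prn{i\gets\IOMap\,h\,m; k\,i}\equiv\prn{j\gets m; k\,\prn{h\,j}}$) turns $i\gets g'; s'\,\prn{i+1}$ into $j\gets\TmGet\,\ell; \TmSet\,\ell\,\prn{\Con{neg}\,\prn{\Con{neg}\,j+1}}$, and the integer identity $\Con{neg}\,\prn{\Con{neg}\,j+1}\equiv j-1$ makes this coincide with $\Con{negCounter}$'s $\Con{incr}$.

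The main obstacle is this final reconciliation: it is precisely here that the proof exploits the arithmetic of $\Con{neg}$ (its involutivity, that it fixes $0$, and that $-\prn{-j+1}=j-1$) to turn the abstract conjugated interface back into concrete backward-counting code. Everything else is bookkeeping with the monad laws and congruence; notably, no $\TmStep$ is incurred anywhere, since the argument never unfolds a $\TmGet$/$\TmSet$ redex but only rearranges how the freshly allocated interface is packaged.
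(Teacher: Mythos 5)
Your proposal is correct and takes essentially the same route as the paper's proof: both instantiate \textsc{representation independence} at $\sigma\equiv\tau\equiv\TpInt$ with $f^+\equiv f^-\equiv\Con{neg}$, then finish with $\Con{neg}\,0\equiv 0$ and the integer identity $\Con{neg}\,\prn{\Con{neg}\,i+1}\equiv i-1$. Your explicit factorization through $\Con{mkCounter} : \TpCell\,\TpInt\to\TpCounter$ merely spells out the monad-law bookkeeping that the paper leaves implicit when it applies the rule directly under the counter constructor.
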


    \begin{proof}
      The function $\Con{neg} : \TpInt\to\TpInt$ sending an integer to its negation is a self-dual automorphism; we therefore calculate from left to right.

      \iblock{
        \mhang{
          \ell\gets \TmAlloc\,0;
          \TmRet\,
          \brc{
            \Con{incr}\hookrightarrow i\gets \TmGet\,\ell; \TmSet\,\ell\,\prn{i+1},
            \Con{read}\hookrightarrow \TmGet\,\ell
          }
        }{
          \row{by \textsc{representation independence}}
          \row{
            \small
            ${}\equiv
              \ell\gets \TmAlloc\,\prn{\Con{neg}\,0};
              \TmRet\,
              \brc{
                \Con{incr}\hookrightarrow i\gets \IOMap\,\Con{neg}\,\prn{\TmGet\,\ell}; \TmSet\,\ell\,\prn{\Con{neg}\,\prn{i+1}},
                \Con{read}\hookrightarrow \IOMap\,\Con{neg}\,\prn{\TmGet\,\ell}
              }$
          }
          \row{by simplification and $\Con{neg}\,0\equiv 0$}
          \mrow{
            {}\equiv
            \ell\gets \TmAlloc\,0;
            \TmRet\,
            \brc{
              \Con{incr}\hookrightarrow i\gets \TmGet\,\ell; \TmSet\,\ell\,\prn{\Con{neg}\,\prn{\Con{neg}\,i+1}},
              \Con{read}\hookrightarrow \IOMap\,\Con{neg}\,\prn{\TmGet\,\ell}
            }
          }
          \row{
            by $\Con{neg}\,\prn{\Con{neg}\,i + 1} \equiv \Con{neg}\,\prn{\Con{neg}\,i}+\Con{neg}\,1 \equiv i - 1$
          }
          \mrow{
            {}\equiv
            \ell\gets \TmAlloc\,0;
            \TmRet\,
            \brc{
              \Con{incr}\hookrightarrow i\gets \TmGet\,\ell; \TmSet\,\ell\,\prn{i-1},
              \Con{read}\hookrightarrow \IOMap\,\Con{neg}\,\prn{\TmGet\,\ell}
            }
          }
        }
      }
      Thus we have $\Con{posCounter}\equiv\Con{negCounter}$.
    \end{proof}

  \end{xsect}

\end{xsect}

\begin{xsect}[sec:semantics-in-uf]{Denotational semantics in univalent foundations}
  We now turn to the construction of a model of univalent reference types. At the coarsest level,
  this model follows the standard template for a model with mutable state: types are interpreted by
  covariant presheaves on a certain category of \emph{worlds} with each world describing the
  collection of references available and the (semantic) type associated to each. The type of
  references $\TpRef\,\tau$ assigns each world to the collection of locations of appropriate type
  while the monad $\TpT$ is then interpreted by a certain \emph{store-passing} monad.

  This simple picture is quickly complicated by the need to model general store: semantic types must
  reference worlds which in turn reference semantic types. This naturally leads us to synthetic
  guarded domain theory (SGDT) in order to cope with the circularity. This alone, however, is
  insufficient. While SGDT allows us to define the category of worlds, the resulting solution is a
  \emph{large type}---at least the size of the universe of semantic types. This becomes a problem when it comes time to model the state monad, which must quantify over all possible worlds for its input and return a new world for its output. To model these large products and sums of worlds, we will base our model on an \emph{impredicative} universe: impredicativity implies that the category of covariant presheaves on our large category of worlds is (locally) cartesian closed and supports all the structure of our language.

  We present some of the prerequisites for our model in
  \cref{sec:semantics:preliminaries}. In \cref{sec:semantics:model} we
  construct the model of univalent reference types.

  \begin{xsect}[sec:semantics:preliminaries]{Univalent impredicative synthetic guarded domain theory}
    We work informally in the language of homotopy type theory~\cite{rijke:2022,hottbook}; in this section, we briefly describe some of our preferred conventions.
    When we speak of ``existence'', we shall always mean \emph{mere} existence. Categories are always assumed to be univalent 1-categories; given a category $\XCat$, we will write $\vrt{\XCat}$ for its underlying 1-type of objects.
    Rather than fixing a global hierarchy of universes, we assume universes locally where needed. In this paper, all universes are assumed to be univalent; when we wish to assume that a universe is closed under the connectives of Martin-L\"of type theory (dependent products, dependent sums, finite coproducts, W-types, \etc) we will refer to it as a \DefEmph{Martin-L\"of universe}. We will not belabor the difference between codes and types.

\begin{xsect}{Impredicative subuniverses in univalent foundations}
  Recall that a type $A$ is called \DefEmph{$\UU$-small} if and only if there exists a (necessarily unique) code $\hat{A}:\UU$ together with an equivalence $\brk{\hat{A}}\simeq A$, and a family is $\UU$-small when each of its fibers are.
  A \DefEmph{reflection} of $A$ in a universe $\UU$ is, by contrast, defined to be a (necessarily unique) function $\eta : A \to A_\UU$ with $A_\UU\in \UU$ such for any type $C\in\UU$, the precomposition map $C^\eta : C\Sup{A_\UU}\to C^A$ is an equivalence. When a reflection of $A$ in $\UU$ exists (necessarily uniquely), we shall say that $A$ is reflected in $\UU$.

  A \DefEmph{subuniverse} of a universe $\UU$ is defined to be a dependent type $A:\UU\vdash\IsTp{PA}$ such that each $PA$ is a proposition. We may write $\UU_P$ for the universe $\Sum{A:\UU}{PA}$ obtained by restricting $\UU$ to the elements satisfying $P$.  We will frequently abuse notation implicitly identifying the predicate coding a subuniverse with its comprehension as an actual type.
  A subuniverse $\SS\subseteq\UU$ is said to be \DefEmph{reflective} if every $A:\UU$ is reflected in $\SS$. A subuniverse of $\UU$ is said to be ``small'' when its comprehension as a type is $\UU$-small.

  Let $\UU$ be a universe closed under dependent products. A subuniverse $\SS\subseteq\UU$ is said to be a \DefEmph{dependent exponential ideal} if for every $A:\UU$ and $B:A\to \SS$, the dependent product $\Prod{x:A}{Bx}$ lies in $\SS$. An \DefEmph{impredicative subuniverse} of $\UU$ is defined to be a small, dependent exponential ideal $\SS\subseteq\UU$. It is proved by Rijke, Shulman, and Spitters~\cite{rijke-shulman-spitters:2020} that any reflective subuniverse of $\UU$ is a dependent exponential ideal of $\UU$.
  We will refer to a subuniverse $\SS$ such that $\SET{\SS}$ is impredicative in $\UU$ as \DefEmph{set-impredicative}; we will refer to $\SS\subseteq\UU$ as \DefEmph{set-reflective} when $\SET{\SS}$ is reflective in $\UU$. Under suitable assumptions, these two conditions are in fact equivalent:

  \begin{theorem}\label{thm:set-level-impred-to-reflective}
    A small $\Sigma$-closed subuniverse $\SS$ of a Martin-L\"of universe $\UU$ is set-impredicative and closed under identity types if and only if it is set-reflective.
  \end{theorem}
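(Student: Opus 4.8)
The plan is to prove the two implications separately, with the forward direction (from set-reflectivity) being essentially a repackaging of known facts and the reverse direction (building a reflection out of impredicativity) carrying the real weight. For the direction \emph{set-reflective $\Rightarrow$ set-impredicative and closed under identity types}, suppose $\SET{\SS}$ is reflective in $\UU$. Set-impredicativity is immediate: by the theorem of Rijke, Shulman, and Spitters~\cite{rijke-shulman-spitters:2020} cited above, every reflective subuniverse is a dependent exponential ideal, and smallness of $\SET{\SS}$ (which follows from that of $\SS$, as $\SET{\SS}$ arises by intersecting $\SS$ with the $\UU$-small, propositional family of sethood witnesses) then makes it impredicative. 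Closure under identity types is the standard fact that a reflective subuniverse is closed under limits of diagrams landing in it; concretely, for $B\in\SET{\SS}$ the identity type $x =_B y$ is the pullback $\mathbf 1\times_B\mathbf 1$ of two global points, a limit of objects of $\SET{\SS}$ and hence again in $\SET{\SS}$. (Equivalently, since $\SS$---and therefore $\SET{\SS}$---is $\Sigma$-closed, $\SET{\SS}$ is a modality, and modalities are closed under identity types.)

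The substance is the converse. Assuming $\SET{\SS}$ is set-impredicative, $\SS$ is $\Sigma$-closed, and $\SET{\SS}$ is closed under identity types, I will construct for each $A:\UU$ a reflection of $A$ into $\SET{\SS}$ by the \emph{naturality-corrected impredicative encoding} of Awodey, Frey, and Speight. Concretely, set
\[
  A_\SS :\equiv \Sum{\alpha : \Prod{C:\SET{\SS}}{(A\to C)\to C}}{\Prod{C,D:\SET{\SS}}{\Prod{f:C\to D}{\Prod{h:A\to C}{f\,(\alpha_C\,h) =_D \alpha_D\,(f\circ h)}}}},
\]
with unit $\eta_A\,a :\equiv \gl{\lambda C\,h.\,h\,a,\ \ast}$. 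Each hypothesis does exactly one job: set-impredicativity (the dependent exponential ideal property together with smallness) puts the outer product $\Prod{C:\SET{\SS}}{(A\to C)\to C}$ into $\SET{\SS}$; closure under identity types puts each proposition $f\,(\alpha_C\,h) =_D \alpha_D\,(f\circ h)$ into $\SET{\SS}$, and further applications of the exponential ideal property quantify over $C,D,f,h$ to land the whole naturality predicate in $\SET{\SS}$; and $\Sigma$-closedness then assembles the pair into $A_\SS\in\SET{\SS}$. Since every $C$ ranges over sets, $A_\SS$ is automatically a set, so the set-truncation implicit in reflecting an arbitrary $A:\UU$ into $\SET{\SS}$ is handled for free.

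It then remains to verify the universal property: for every $C\in\SET{\SS}$ the precomposition map $C^{\eta_A}$ is an equivalence. A section is given by $g\mapsto(\lambda\alpha.\,\alpha_C\,g)$, and computing $C^{\eta_A}$ on it recovers $g$ by the $\beta$-rule for $\eta_A$. The hard part---and the step I expect to be the main obstacle---is showing this section is also a retraction, \ie that every $k:A_\SS\to C$ satisfies $\alpha_C\,(k\circ\eta_A) = k\,\alpha$ for all $\alpha$. This is a Yoneda-style argument that uses the naturality component of $\alpha$ twice: first, instantiating naturality at $f:\equiv k$ and $h:\equiv\eta_A$ (legitimate precisely because $A_\SS$ itself lies in $\SET{\SS}$) reduces the goal to the fixed-point identity $\alpha_{A_\SS}\,\eta_A = \alpha$; and second, instantiating naturality at each section map $\lambda\beta.\,\beta_{C'}\,g$ shows that $\alpha_{A_\SS}\,\eta_A$ and $\alpha$ agree after evaluation at every $C'$ and $g$, whence they are equal by function extensionality (their naturality witnesses being propositional and so automatically equal). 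The delicacy lies entirely in this self-referential use of $A_\SS\in\SET{\SS}$ inside its own universal property, and in checking that the naturality witness forces the two evaluations to coincide on the nose rather than merely up to a coherence one would then have to track.
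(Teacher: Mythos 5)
Your proposal is correct and takes the same route as the paper: the paper's proof is a one-line appeal to the methods of Awodey, Frey, and Speight~\cite{awodey-frey-speight:2018}, and your naturality-refined impredicative encoding, with the hypotheses allocated exactly as you describe and the Yoneda-style retraction argument exploiting $A_\SS\in\SET{\SS}$ and the propositionality of the naturality witness, is precisely what that citation unfolds to.
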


  \begin{proof}
    This can be shown using the methods of Awodey, Frey, and Speight~\cite{awodey-frey-speight:2018}.
  \end{proof}

  By virtue of \cref{thm:set-level-impred-to-reflective}, we see that small reflective subuniverses are just another presentation of the impredicative universes that appear in the Calculus of Constructions.

\end{xsect}

\begin{xsect}{The Hofmann--Streicher universe}
  Let $\SS$ be a small subuniverse of a Martin-L\"of universe $\UU$, and let $\XCat$ be a $\UU$-small category; we can define the \DefEmph{Hofmann--Streicher lifting}~\cite{hofmann-streicher:1997,awodey:2022:universes} of $\SET{\SS}$ as co-presheaf of 1-types on $\XCat$. Formally, this means constructing a functor from the 1-category $\XCat$ to the (2,1)-category $\NTp{1}{\UU}$ of 1-types in $\UU$; thus we depend technically on the account of bicategories in univalent foundations due to Ahrens~\etal~\cite{ahrens-et-al:2021:bicategories}.\footnote{We differ from the conventions of Ahrens~\etal~\cite{ahrens-et-al:2021:bicategories}: we will say ``2-category'' to mean \emph{univalent bicategory} in the sense of \opcit, as we are not at all concerned with the strict notions considered there. Therefore, a ``(2,1)-category'' in our sense refers to a 2-category whose 2-cells are given by identifications.}

  \begin{remark}
    The purpose of introducing the Hofmann--Streicher lifting in such detail is give some structure to the otherwise bewildering \cref{con:presheaf-to-hs-code}, which plays a crucial technical role in the definition of univalent reference types.
  \end{remark}

  \begin{construction}[The Hofmann--Streicher lifting]
    Let $\SS$ be a small subuniverse of a Martin-L\"of universe $\UU$, and let $\XCat$ be a $\UU$-small category.
    We may define a 2-functor $\floors{\SET{\SS}} : \XCat\to \NTp{1}{\UU}$ called the \DefEmph{Hofmann--Streicher lifting} of $\SET{\SS}$ as follows:

    \iblock{
      \mrow{
        \floors{\SET{\SS}} U :\equiv
        \vrt{\FUN{U/\XCat}{\SET{\SS}}}
      }
      \mrow{
        \floors{\SET{\SS}} \prn{f : U \to V} \, \prn{E : U/\XCat\to \SET{\SS}}:\equiv
        E\circ \prn{f/\XCat}
      }
    }

    When $\XCat$ is viewed as a 2-category, the 2-cells are given by identifications. Thus the 2-functoriality of $\floors{\SET{\SS}}$ and all related coherences are defined by path induction.
  \end{construction}

  \begin{construction}[Restricting co-presheaves]\label{con:presheaf-to-hs-code}
    Let $\SS$ be a small subuniverse of a Martin-L\"of universe $\UU$, and let $\XCat$ be a $\UU$-small category. Every co-presheaf $E : \XCat\to \SET{\SS}$ determines a global element $\ObjOne\to \floors{\SET{\SS}}$ of the Hofmann--Streicher universe; in particular, we may define $\ceils{E}_U : \floors{\SET{\SS}}U$ natural in $U\in\XCat$ by setting $\ceils{E}_U\,\prn{f : U \to V} :\equiv E V$.
  \end{construction}
\end{xsect}

\begin{xsect}{(Higher) synthetic guarded domain theory}
  We adapt Birkedal~\etal's formulation~\cite{bgcmb:2016} of dependently typed guarded recursion to the setting of homotopy type theory. In particular, we introduce a new syntactic sort of \emph{delayed substitutions} $\vdash \xi \leadsto \Xi$ simultaneously with a new type former $\vdash\IsTp{\Ltr[\xi]A}$ called the \DefEmph{later modality},\footnote{The ``later modality'' is \emph{not} a modality in the sense of Rijke, Shulman, and Spitters~\cite{rijke-shulman-spitters:2020}, but rather in the older and more general sense of modality in type theory or logic.} whose introduction form is written $\Next[\xi]a$; we summarize the rules for the later modality in \cref{fig:later}.
  The \emph{raison d'\^etre} for the later modality is to form \DefEmph{guarded fixed points}: in particular, if we have $f : \Ltr A \to A$, there is a \emph{unique} element $\LtrFix{f}:A$ such that $f\,\prn{\Next\,\prn{\LtrFix{f}}} = \LtrFix{f}$. In particular, this gives unique fixed points for any function $f : A \to A$ factoring on the left through $\Next : A \to \Ltr{A}$.

\begin{figure}
  \begin{mathpar}
    \inferrule{
      \xi \leadsto \Xi\\
      \IsTp{A}
    }{
      \IsTp{\Ltr[\xi]A}
    }
    \and
    \inferrule{
      \xi \leadsto \Xi\\
      a : A
    }{
      \Next[\xi]a : \Ltr[\xi]A
    }
    \and
    \inferrule{
    }{
      \cdot \leadsto \cdot
    }
    \and
    \inferrule{
      \xi \leadsto\Xi \\
      a : \Ltr[\xi]{A}
    }{
      \prn{\xi, x\leftarrow a} \leadsto \Xi,x:A
    }
    \and
    \inferrule{
      f : \Ltr{A}\to A
    }{
      \LtrFix\,f : A
    }
    \and
    \inferrule{
      f : \Ltr{A}\to A
    }{
      \LtrFix\,f \equiv f\,\prn{\Next\,\prn{\LtrFix\,f}}
    }
  \end{mathpar}

  \caption{Summary of delayed substitutions and the later modality; there are a number of equational rules governing the delayed substitutions, \eg $\Ltr[\xi,x\leftarrow a]A \equiv \Ltr[\xi]{A}$ for any $A$ in which $x$ does not appear; we also assume $\prn{\Ltr[\xi]a = b}\simeq \prn{\Next[\xi]a = \Next[\xi]b}$, making $\Ltr$ left exact. We will write $\Ltr{A}$ and $\Next\,{a}$ for $\Ltr[\cdot]A$ and $\Next[\cdot]a$ respectively. For the remaining rules, we refer the reader to the description of Bizjak and M\o{}gelberg~\cite{bizjak-mogelberg:2020}.}
  \label{fig:later}
\end{figure}

  \begin{definition}
    A \DefEmph{guarded ($n$-)domain} is an ($n$-)type $A$ equipped with the structure of a $\Ltr$-algebra, \ie a function $\vartheta_A:\Ltr{A}\to A$.
  \end{definition}

  We will refer to a (sub)universe closed under later modalities as a \DefEmph{guarded (sub)universe}. For any universe $\SS$, we may consider the category $\DOM{\SS}$ of \DefEmph{guarded 0-domains} in $\SS$, \ie sets $A:\SET{\SS}$ equipped with a mapping $\vartheta_A:\Ltr{A}\to A$.

  \begin{lemma}\label{lem:dom-set-adjunction}
    If $\SS$ is a guarded universe closed under binary coproducts, then the forgetful functor $\DomToSet:\DOM{\SS}\to\SET{\SS}$ has a left adjoint $\SetToDom : \SET{\SS}\to\DOM{\SS}$.
  \end{lemma}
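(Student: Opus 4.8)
The plan is to realize $\SetToDom X$ as the \emph{free} guarded $0$-domain on a set $X$, which I will construct as a guarded recursive type. Concretely, I would define $\SetToDom X$ to be the unique fixed point $L$ of the operation $\Phi : \SET{\SS}\to\SET{\SS}$ given by $\Phi\,Y :\equiv X + \Ltr Y$. The point is that $\Phi$ factors on the left through $\Next : \SET{\SS}\to\Ltr{\SET{\SS}}$: using a delayed substitution we may set $\Psi\,y :\equiv X + \Ltr[x\leftarrow y]{x}$ for $y : \Ltr{\SET{\SS}}$, and then $\Psi\,\prn{\Next\,Y}\equiv X + \Ltr Y\equiv\Phi\,Y$. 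By the guarded fixed-point combinator $\LtrFix$ applied at $\SET{\SS}$, there is therefore a unique $L :\equiv \LtrFix\,\Psi$ with $L \equiv X + \Ltr L$. Since $X$ is a set, $\Ltr$ preserves truncatedness, and $\SS$ is a guarded universe closed under binary coproducts, the operation $\Psi$ genuinely lands in $\SET{\SS}$, so $L$ is a set lying in $\SS$ as required.

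Next I would equip $L$ with the structure of a guarded domain. The equivalence $L\simeq X + \Ltr L$ supplies two injections $\Con{inl} : X \to L$ and $\Con{inr} : \Ltr L \to L$, and I take $\vartheta_L :\equiv \Con{inr} : \Ltr L \to L$, so that $L$ is an object of $\DOM{\SS}$. The unit of the intended adjunction is then $\eta_X :\equiv \Con{inl} : X \to \DomToSet\,L$.

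To verify the universal property I would show that, for every guarded domain $A$ with structure map $\vartheta_A : \Ltr A \to A$ and every set-map $g : X \to \DomToSet\,A$, there is a \emph{unique} $\Ltr$-algebra morphism $\hat g : L \to A$ with $\DomToSet\,\hat g\circ\eta_X = g$. Unfolding $L\equiv X + \Ltr L$, a map $h : L \to A$ that is both an algebra morphism and an extension of $g$ is exactly a fixed point of the operator $r : \prn{L\to A}\to\prn{L\to A}$ defined by $r\,k\,\prn{\Con{inl}\,x} :\equiv g\,x$ and $r\,k\,\prn{\Con{inr}\,l} :\equiv \vartheta_A\,\prn{\Ltr k\; l}$; here the first clause forces $h\circ\Con{inl}=g$ and the second forces the algebra law $h\circ\vartheta_L = \vartheta_A\circ\Ltr h$. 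Because the recursive occurrence of $k$ appears only under $\Ltr$, the operator $r$ again factors through $\Next : \prn{L\to A}\to\Ltr\prn{L\to A}$, so $\LtrFix$ furnishes both the existence of $\hat g$ and, via the uniqueness clause of the guarded fixed point, its uniqueness. This establishes the required natural bijection $\DOM{\SS}\prn{\SetToDom\,X, A}\simeq\SET{\SS}\prn{X,\DomToSet\,A}$; naturality in $X$ and $A$ is then a routine computation, and I would extend $\SetToDom$ to morphisms by the usual action of a left adjoint on the universal arrows.

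The step I expect to be the main obstacle is the careful justification of the two factorizations through $\Next$ — at the level of the universe for the type $L$ itself, and at the level of the function space $L\to A$ for the mediating morphism — since these are exactly what license the two appeals to $\LtrFix$ and hence deliver uniqueness for free. In both cases the factorization is witnessed by a delayed substitution, so the work is to confirm that the relevant operations depend on their argument only through $\Ltr$ (equivalently, through $\Next$) and that all intermediate types remain sets lying in $\SS$, using the closure of $\SS$ under later modalities and binary coproducts together with the left-exactness of $\Ltr$.
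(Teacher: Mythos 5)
Your proposal is correct and takes essentially the same approach as the paper: the left adjoint is defined by solving the domain equation $\SetToDom A \cong A + \Ltr{\SetToDom A}$ via $\LtrFix$ applied to $\lambda X : \Ltr{\SET{\SS}}.\, A + \Ltr[Y\gets X]Y$, exactly as in your first paragraph. The paper's proof stops there, leaving the freeness implicit, whereas you additionally verify the universal property by a second guarded fixed point in the function space $L\to A$ --- a correct and welcome elaboration, including your observation that the uniqueness clause of $\LtrFix$ delivers uniqueness of the mediating algebra morphism.
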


  \begin{proof}
    We define $\SetToDom{A}$ by solving the domain equation $\SetToDom A \cong A + \Ltr{\SetToDom{A}}$ via the following guarded fixed point construction in $\SET{\SS}$, using both guarded structure and binary coproducts:
    \[
      \SetToDom A :\equiv \LtrFix\prn{\lambda X : \Ltr{\SET{\SS}}. A + \Ltr[Y\gets X]Y}
      \qedhere
    \]
  \end{proof}

  \begin{notation}
    We will write $\Con{now} : A \to \DomToSet{\SetToDom{A}}$ for the unit of the adjunction $\SetToDom\dashv\DomToSet$.
  \end{notation}

  \begin{lemma}[Later modality in presheaves]
    \label{lem:later-in-presheaves}
    Given a guarded set-reflective small subuniverse $\SS \subseteq \UU$ and a $\UU$-small category
    $\XCat$, the later modality from $\SS$ lifts (with all its operations) into
    $\FUN{\XCat}{\SET{\SS}}$ pointwise, \ie for any $A\in\FUN{\XCat}{\SET{\SS}}$ we may define
    $\prn{\Ltr{A}}U :\equiv \Ltr\prn{AU}$.
  \end{lemma}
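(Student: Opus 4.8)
The plan is to define the lifted modality objectwise and then to observe that every datum and law of the later modality in $\FUN{\XCat}{\SET{\SS}}$ is either an object of $\SET{\SS}$, a natural transformation between such functors, or an equation between these, so that all of the required structure and coherences reduce to the corresponding facts about $\Ltr$ in $\SS$ evaluated at each $U\in\vrt{\XCat}$. Concretely, for $A\in\FUN{\XCat}{\SET{\SS}}$ I set $\prn{\Ltr A}U:\equiv \Ltr\prn{AU}$ and $\prn{\Ltr A}\prn{f:U\to V}:\equiv\Ltr\prn{Af}$, where $\Ltr\prn{Af}:\Ltr\prn{AU}\to\Ltr\prn{AV}$ is the functorial action of the later modality, namely $\lambda a.\,\Next[x\gets a]{\prn{Af}\,x}$ (whose codomain is $\Ltr\prn{AV}$ by the delayed-substitution equation $\Ltr[\xi,x\gets a]B\equiv\Ltr[\xi]B$, since $AV$ does not mention $x$).

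First I would check that this really defines an object of $\FUN{\XCat}{\SET{\SS}}$. That each $\Ltr\prn{AU}$ again lies in $\SET{\SS}$ relies on the hypothesis that $\SS$ is \emph{guarded}, i.e.\ closed under $\Ltr$, together with the left-exactness of $\Ltr$ recorded in \cref{fig:later}, which guarantees that $\Ltr$ preserves $0$-types and hence sends the set $AU$ to a set. Functoriality---that $\Ltr$ preserves identity morphisms and that $\Ltr\prn{A\prn{g\circ f}}=\Ltr\prn{Ag}\circ\Ltr\prn{Af}$---follows by combining the functor laws of $A$ with the evident functor laws of the action of $\Ltr$; each of these is an equation between functions of sets and is therefore a mere proposition, so no higher coherence arises.

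Next I would lift each operation. The unit $\Next$ becomes the natural transformation whose component at $U$ is $\Next:AU\to\Ltr\prn{AU}$; its naturality square commutes by the exchange law $\Ltr\prn{Af}\prn{\Next a}=\Next\prn{\prn{Af}a}$, an instance of the delayed-substitution equations of \cref{fig:later}. The delayed substitutions themselves, and the left-exactness equivalence $\prn{\Ltr a=b}\simeq\prn{\Next a=\Next b}$, transport objectwise, since identifications of natural transformations between set-valued functors are computed componentwise. The only step requiring genuine thought is the guarded fixed point $\LtrFix$: given a natural transformation $f:\Ltr A\to A$, I form the candidate global element with components $\LtrFix\prn{f_U}:AU$, and I must verify that this family is natural, i.e.\ that $Ag\prn{\LtrFix f_U}=\LtrFix f_V$ for every $g:U\to V$. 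I expect this to be the main obstacle, and the key tool is the \emph{uniqueness} of guarded fixed points: it suffices to show that $Ag\prn{\LtrFix f_U}$ is itself a fixed point of $f_V$, which follows by chaining naturality of $\Next$, naturality of $f$, and the fixed-point equation for $f_U$. Once naturality of $\LtrFix$ is established, the defining equation $\LtrFix f=f\prn{\Next\prn{\LtrFix f}}$ holds because it holds at each component, completing the verification that $\FUN{\XCat}{\SET{\SS}}$ inherits the later modality, with all of its operations, computed pointwise.
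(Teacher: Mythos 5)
Your proposal is correct and takes exactly the approach the paper intends: the paper states \cref{lem:later-in-presheaves} without proof, treating the pointwise lifting as routine, and your verification fills in precisely that argument. In particular, you correctly isolate the one non-trivial step---naturality of the componentwise $\LtrFix$---and discharge it the right way, by showing $Ag\,\prn{\LtrFix\,f_U}$ satisfies the fixed-point equation for $f_V$ and appealing to the \emph{uniqueness} of guarded fixed points asserted in the paper, with the remaining data ($\Next$, left exactness, the functor laws) reducing to componentwise equations between sets as you observe.
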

\end{xsect}

   \end{xsect}

  \begin{xsect}[sec:semantics:model]{Models of univalent general reference types}

    To construct our model of higher-order store (\cref{sec:the-model}), we must construct a suitable category of recursively defined semantic worlds (\cref{sec:semantics:worlds}) whose co-presheaves admit reference types (\cref{con:ref-copresheaf}) and a strong monad for higher-order store (\cref{sec:cbpv-adjunction}).

    \begin{xsect}[sec:semantics:worlds]{Worlds as univalent heap configurations}
Let $\INJ$ be the category of finite sets and embeddings; by univalence, any two equipollent finite sets are identified. We now
define the basic elements of worlds \emph{qua} heap configurations.

\begin{definition}[The displayed category of families]
  For any 1-type $X$, we define the displayed category~\cite{ahrens-lumsdaine:2019} $\InjFinFam{X}$ of \DefEmph{$\INJ$-families in $X$} over $\INJ$ as follows:
  \begin{enumerate}
    \item over a finite set $I:\INJ$, a displayed object of $\InjFinFam{X}$ is a function $\partial_I : I\to X$;

    \item over a function $f:I\to J$ between finite sets, a displayed morphism from $\partial_I$ to $\partial_J$ is a path $\partial_f : \partial_J\circ f = \partial_I$ in $I\to X$.

  \end{enumerate}

\end{definition}

\begin{definition}[The category of bags]
  For any 1-type $X$, the category of \DefEmph{$\INJ$-bags in $X$} is defined to be the total category $\InjFinBag{X} :\equiv \Int{\INJ}\InjFinFam{X}$ of the displayed category of finite families in $X$. We will write $U\equiv\prn{\vrt{U},\partial_U}$ for an object of $\InjFinBag{X}$.
\end{definition}

\begin{definition}
  For a universe $\SS$, we define the category $\WCat$ of \DefEmph{worlds} simultaneously with its category of $\SET{\SS}$-valued co-presheaves on $\WCat$ to be the unique solution to the guarded recursive domain equation
  $
    \WCat = \InjFinBag{\Ltr\vrt{\FUN{\WCat}{\SET{\SS}}}}
  $.
\end{definition}

\begin{proof}[Construction]
  The system of equations above is solved internally~\cite{birkedal-mogelberg:2013} by L\"ob induction in any guarded Martin-L\"of universe $\SS^+$ containing $\SS$.
  \[
    E :\equiv \LtrFix\prn{\lambda R:\Ltr{\SS^+}. \vrt{\FUN{\InjFinBag{\vartheta\Sub{\SS^+} R}}{\SET{\SS}}}}
    \qquad
    \WCat :\equiv \InjFinBag{\Ltr{E}}
  \]

  Of course, $E$ is the 1-type of objects of the functor category $\FUN{\WCat}{\SET{\SS}}$.
\end{proof}

We shall require the following technical observation:

\begin{lemma}[Structure identity principle for presheaves]\label{lem:sip:presheaves}
  Let $\SS$ be a universe and let $\XCat$ be a category; for any $A,B:\FUN{\XCat}{\SET{\SS}}$, let $A\cong B$ be the type of natural isomorphisms between presheaves. Then the canonical map $A=B \to A\cong B$ is an equivalence.
\end{lemma}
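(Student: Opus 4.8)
The statement is an instance of the general Structure Identity Principle in univalent foundations: for a suitably structured notion, paths between structured objects coincide with the structure-preserving isomorphisms. Here the structured objects are $\SS$-valued co-presheaves $A,B : \FUN{\XCat}{\SET{\SS}}$, and the relevant notion of isomorphism is that of natural isomorphism. The plan is to reduce the claim to univalence applied pointwise, exploiting that $\XCat$ is assumed to be a univalent category and that $\SET{\SS}$ is a univalent (2,1)-category of sets.

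\medskip

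First I would recall that a functor $A : \XCat \to \SET{\SS}$ is determined by its action on objects $A_0 : \vrt{\XCat} \to \SET{\SS}$ together with the functorial action and the coherence data (preservation of identities and composites), with naturality witnessed by paths since $\SET{\SS}$ is a 1-type. The functor type $\FUN{\XCat}{\SET{\SS}}$ is thus a $\Sigma$-type whose first component is the object-action and whose remaining components are propositional or set-level coherences once $\SET{\SS}$ is fixed to consist of sets. The key move is to transport identity types through this $\Sigma$-decomposition: by the characterization of paths in $\Sigma$-types, an identification $A = B$ decomposes into an identification of object-actions $A_0 = B_0$ together with identifications of the functorial actions lying over it. Because $\SET{\SS}$ is univalent, an identification $A_0\,U = B_0\,U$ of sets is the same as an isomorphism (bijection) $A_0\,U \cong B_0\,U$, naturally in $U$; and by function extensionality a pointwise family of such identifications assembles into $A_0 = B_0$.

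\medskip

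Concretely, I would construct the comparison map $A = B \to A \cong B$ as the obvious one sending $\Con{refl}$ to the identity natural isomorphism, and then exhibit its inverse. Given a natural isomorphism $\phi : A \cong B$, at each $U$ the component $\phi_U : A\,U \cong B\,U$ is an isomorphism of sets, hence by the univalence of $\SET{\SS}$ yields a path $\mathsf{ua}(\phi_U) : A\,U = B\,U$; function extensionality then packages these into a path between object-actions. The remaining obligation is that this path lifts to a genuine identification of functors, i.e.\ that the functorial-action and coherence components agree after transport. Here naturality of $\phi$ is exactly what guarantees that transporting $A$'s action along $\mathsf{ua}(\phi_U)$ recovers $B$'s action: the naturality square for $\phi$ is precisely the computation rule for transport of a morphism-action across a univalence path. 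Since the coherence conditions (unit and composition laws) live in a set, their agreement is automatic. That the two constructions are mutually inverse follows by path induction on one side and by the $\eta$-rule for natural transformations (an equality of naturals is an equality of components, which is propositional) on the other.

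\medskip

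I expect the main obstacle to be the bookkeeping of transport in the morphism component: one must verify that the image under transport-along-$\mathsf{ua}(\phi)$ of the functorial action of $A$ is definitionally or propositionally equal to that of $B$, and this is where the naturality of $\phi$ must be invoked in exactly the right form. This is a routine but fiddly computation relating the transport of a dependent function across a univalence path to pre- and post-composition with the underlying equivalence, and I would discharge it using the standard lemmas on transport in function types together with the computation rule for univalence. Conceptually, however, nothing deep is required beyond univalence of $\SET{\SS}$, univalence of $\XCat$, and function extensionality; the result is genuinely a corollary of the general displayed-category SIP of Ahrens and Lumsdaine, specialized to the displayed category of functor structure over the type of object-assignments.
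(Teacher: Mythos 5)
Your plan is correct, and it is worth noting at the outset that the paper offers no argument for this lemma at all: it is stated as a bare technical observation, implicitly deferring to the standard result of univalent category theory (Ahrens--Kapulkin--Shulman) that a functor category is univalent whenever its codomain is, combined with the fact that univalence of the universe $\SS$ makes $\SET{\SS}$ a univalent category. What you have written out is essentially the standard proof of that result: decompose $\FUN{\XCat}{\SET{\SS}}$ as a $\Sigma$-type over object-assignments, observe that the functor laws are propositions (equations in hom-sets, which are sets), characterize paths in the $\Sigma$-type, handle the object part by pointwise univalence plus function extensionality, and discharge the morphism part by the computation that transporting the morphism action along $\mathsf{ua}(\phi_U)$ conjugates it by the underlying equivalences---which is exactly the naturality square for $\phi$. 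Your closing remark that the lemma is an instance of the displayed-category structure identity principle of Ahrens and Lumsdaine~\cite{ahrens-lumsdaine:2019} is an equally good route, and arguably the cleaner one, since it packages the fiddly transport bookkeeping you correctly identify as the main obstacle.

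One superfluous hypothesis: you invoke univalence of $\XCat$, but it plays no role---only univalence of the codomain enters, and the lemma holds for an arbitrary precategory $\XCat$. This is worth getting right in context, because the lemma is applied in \cref{con:ref-copresheaf} with $\XCat$ a slice $U/\WCat$ of the recursively defined category of worlds, and it is pleasant that the proof imposes no condition on the domain that would have to be separately verified there. Everything else, including the observation that agreement of the coherence data after transport is automatic by proposition-hood, is accurate.
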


We now come to the construction of the univalent reference type constructor.

\begin{construction}[Univalent references]\label{con:ref-copresheaf}
  Let $\SS$ be a small, guarded, $\Sigma$-closed set-reflective subuniverse of a guarded Martin-L\"of universe $\UU$ containing $\INJ$. We define the \DefEmph{univalent reference type constructor} as a mapping $\TpRef : \vrt{\FUN{\WCat}{\SET{\SS}}}\to\vrt{\FUN{\WCat}{\SET{\SS}}}$:

  \iblock{
    \mrow{\TpRef : \vrt{\FUN{\WCat}{\SET{\SS}}}\to\vrt{\FUN{\WCat}{\SET{\SS}}}}
    \mrow{
      \TpRef\,A\,U :\equiv
      \Sum{\ell : \vrt{U}}
      \Ltr[X \gets \partial_U\ell]\,
      \ceils{X}_U = \ceils{A}_U
    }
  }

  Above, we have used the $\ceils{-}$ operator from \cref{con:presheaf-to-hs-code}. We define the functorial action of $\TpRef\,A$ on $f:U\to V$ by path induction on the identification $\partial_f : \partial_V \circ \vrt{f} = \partial_U$:
  \[
    \TpRef\,A\,\prn{\vrt{f}, \Refl{}}\,\prn{\ell,\phi} :\equiv
    \prn{
      \vrt{f}\ell,
      \Next\brk{X\gets \partial_V\prn{\vrt{f}\ell}, \psi\gets \phi}.\,
      \Con{ap}\Sub{\floors{\SET{\SS}}f} \psi
    }
  \]
\end{construction}
\begin{proof}
  That the identification $\ceils{X}_U = \ceils{A}_U$ is $\SS$-small follows from \cref{lem:sip:presheaves}, using the fact that $U/\WCat$ is $\UU$-small because $\SS$ is assumed $\UU$-small.
\end{proof}

     \end{xsect}

\begin{xsect}[sec:cbpv-adjunction]{A strong monad for general store}
  Rather than constructing the monad for general store all at once by hand, we take a more bite-sized approach by decomposing it into a simpler call-by-push-value adjunction following Levy~\cite{levy:2003}. In fact, we go
  quite a bit further than this and decompose the call-by-push-value adjunction itself into three
  separate and simpler adjunctions; the advantage of our decomposition is that it reveals the simple and elegant source of the admittedly complex explicit constructions of \opcit. All these adjunctions will be suitably \emph{enriched} so as to give rise to a strong monad.\footnote{The purpose of strength, as ever, to transform the global Kleisli extension operation of the monad into a \emph{binding}-operation that applies in arbitrary contexts.}
  To get started, we will first require the concept of a \DefEmph{heaplet}, which is the valuation of a heap configuration at a particular world, assigning each
  specified location to an element of the prescribed semantic type at that world.

  \begin{construction}[The heaplet distributor]
    Let $\SS$ be guarded universe closed under finite products. We may define a distributor $\Hp : \OpCat{\WCat}\times\WCat \to \SET{\SS}$ like so:

    \iblock{
      \mrow{\Hp : \OpCat{\WCat}\times\WCat \to \SET{\SS}}
      \mrow{\Hp\,\prn{U,V} :\equiv \Prod{\ell:\vrt{U}}\vartheta\Sub{\vrt{\FUN{\WCat}{\SET{\SS}}}}\,\prn{\partial_U\ell}\,V}
    }

    Then we will write $\widetilde{\Hp}$ for the dependent sum $\Sum{U:\vrt{\WCat}}\Hp\,U\,U$ classifying \DefEmph{heaps}.  We will write $\HpProj : \widetilde{\Hp}\to \vrt{\WCat}$ for the first projection of a packed heap; given $H : \widetilde{\Hp}$ and $\ell : \vrt{\HpProj{H}}$, we will write $H\At\ell : \Ltr\brk{X\leftarrow \partial\Sub{\HpProj{H}}\ell}X\prn{\HpProj{H}}$ for the element stored by $H$ at location $\ell$.

  \end{construction}

  \paragraph*{Presheaf categories and unenriched adjunctions}
  Let $\SS$ be a guarded universe closed under finite products.
  As $\widetilde{\Hp}$ is a 1-type, we can equally well view it as a category whose hom sets are given by identity types, \ie a groupoid. From this point of view, the projection $\HpProj : \widetilde{\Hp}\to \vrt{\WCat}$ extends to functors $\HpProj : \widetilde{\Hp}\to \WCat$ and $\BarHpProj : \widetilde{\Hp}\cong\OpCat{\widetilde{\Hp}}\to \OpCat{\WCat}$. We will use these projections to construct a network of adjunctions between the following presheaf categories:

  \begin{multicols}{2}
    \iblock{
      \mrow{\PCat :\equiv \FUN{\WCat}{\SET{\SS}}}
      \mrow{\BarPCat :\equiv \FUN{\OpCat{\WCat}}{\SET{\SS}}}
      \mrow{\NCat :\equiv \FUN{\OpCat{\WCat}}{\DOM{\SS}}}
      \mrow{\QCat :\equiv \FUN{\widetilde{\Hp}}{\SET{\SS}}}
    }
  \end{multicols}

  \begin{exegesis}
    $\PCat$ is the category on which our higher-order state monad is defined; this monad arises from a call-by-push-value adjunction~\cite{levy:2003} in which $\PCat$ is the category of ``value types and pure functions'' and $\NCat$ is the category of ``computation types and stacks''. A computation type differs from a value type in two ways, as it is both \emph{contravariantly} indexed in worlds and valued in 0-domains rather than sets. We will treat these differences modularly by factoring the adjunction $\F\dashv\U : \NCat\to\PCat$ through further adjunctions. Our first adjoint resolution, to deal strictly with variance, is described in \cref{lem:unenriched-adjoints}; later on in \cref{lem:enriched-lift}, we will lift the adjunction between sets and 0-domains to the world of presheaves.
  \end{exegesis}

  \begin{lemma}\label{lem:unenriched-adjoints}
    Let $\SS$ be a small, set-reflective, guarded subuniverse of a guarded Martin-L\"of universe $\UU$ containing $\INJ$. Then the unenriched base change functors $\Delta\Sub{\HpProj} : \PCat\to\QCat$ and $\Delta\Sub{\BarHpProj} : \BarPCat\to\QCat$ has left and right adjoints $\exists\Sub{\BarHpProj}\dashv\Delta\Sub{\BarHpProj}$ and $\Delta\Sub{\HpProj}\dashv\forall\Sub{\HpProj}$ respectively.
  \end{lemma}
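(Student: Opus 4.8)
The plan is to realize the two claimed adjoints as Kan extensions along $\HpProj$ and $\BarHpProj$, and then to check that their pointwise formulas stay inside $\SET{\SS}$ using impredicativity for the right adjoint and set-reflectivity for the left adjoint. Concretely, $\forall\Sub{\HpProj}$ is the right Kan extension of a co-presheaf $G\in\QCat$ along $\HpProj$ and $\exists\Sub{\BarHpProj}$ is the left Kan extension of $G$ along $\BarHpProj$; writing $\WCat\prn{U,V}$ for the hom-set of $\WCat$, these are given at an object $U$ by the end and coend
\[
  \forall\Sub{\HpProj}G\,U \;:\equiv\; \int_{h:\widetilde{\Hp}} \prn{G\,h}^{\WCat\prn{U,\HpProj h}}
  \qquad
  \exists\Sub{\BarHpProj}G\,U \;:\equiv\; \int^{h:\widetilde{\Hp}} \WCat\prn{U,\HpProj h}\times G\,h
\]
where I have used $\BarHpProj h\equiv\HpProj h$ on objects. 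Because $\widetilde{\Hp}$ is a groupoid, both (co)ends degenerate to a (co)limit over $\widetilde{\Hp}$ and the attendant dinaturality coherences are handled by path induction, so the only substantive work is (i) checking that these formulas remain $\SET{\SS}$-valued and (ii) verifying the hom-set bijections.

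For step (i) I would first record the size bookkeeping that makes everything go through. Since $\SS$ is $\UU$-small and the carrier $E\equiv\vrt{\FUN{\WCat}{\SET{\SS}}}$ of the guarded fixed point defining $\WCat$ can be arranged to live in $\UU$, both $\vrt{\WCat}$ and $\widetilde{\Hp}$ are $\UU$-small $1$-types and each hom-set $\WCat\prn{U,V}$ is a $\UU$-small set. The end $\forall\Sub{\HpProj}G\,U$ is then an equalizer of a $\UU$-indexed dependent product of exponentials $\prn{G\,h}^{\WCat\prn{U,\HpProj h}}$ of $\SET{\SS}$-types; since $\SS$ is set-reflective it is in particular set-impredicative and closed under identity types (\cref{thm:set-level-impred-to-reflective}, after Rijke, Shulman, and Spitters~\cite{rijke-shulman-spitters:2020}), so the exponential, the $\UU$-indexed product, and the equalizer all remain in $\SET{\SS}$. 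Dually, the coend $\exists\Sub{\BarHpProj}G\,U$ is a set-quotient of the $\UU$-small dependent sum $\Sum{h:\widetilde{\Hp}}\WCat\prn{U,\HpProj h}\times G\,h$; this sum need not be $\SS$-small, so I apply the set-reflection $\prn{-}\Sub{\SET{\SS}}$ to it, which exists precisely because $\SS$ is set-reflective.

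For step (ii), the adjunction $\Delta\Sub{\HpProj}\dashv\forall\Sub{\HpProj}$ falls straight out of the end form of the set of natural transformations together with the usual Fubini/Yoneda manipulations for ends, all valid internally once naturality in $U$ is supplied by functoriality of the end. The adjunction $\exists\Sub{\BarHpProj}\dashv\Delta\Sub{\BarHpProj}$ is where I must show the reflection to be harmless: a morphism $\exists\Sub{\BarHpProj}G\to B$ in $\BarPCat$ is a map out of the reflected coend into the $\SET{\SS}$-valued presheaf $B$, and since every component $B\,U$ already lies in $\SET{\SS}$, the universal property of the set-reflection identifies such maps with maps out of the \emph{un}-reflected coend, which are in turn exactly the natural transformations $G\to\Delta\Sub{\BarHpProj}B$ by the ordinary coend computation. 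In short, the reflection is transparent here because we only ever map into $\SET{\SS}$-valued functors.

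I expect the main obstacle to be the size bookkeeping of step (i) rather than the adjunction calculus of step (ii). The categories $\WCat$ and $\widetilde{\Hp}$ are large---their object types are at least as big as the universe of semantic types---so the products and sums above are \emph{not} small (co)limits and cannot be formed by any ordinary closure property of $\SET{\SS}$. The whole load is carried by the twin facts that $\SET{\SS}$ is closed under $\UU$-indexed products (impredicativity) and admits a reflection of every $\UU$-type (set-reflectivity); establishing that the indexing data---$\widetilde{\Hp}$ together with the hom-sets of $\WCat$---is genuinely $\UU$-small is therefore the crux, and it is exactly at this point that the guarded fixed-point presentation of $\WCat$, whose carrier was arranged to live in $\UU$, becomes essential.
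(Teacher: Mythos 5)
Your proposal is correct and takes essentially the same route as the paper, whose proof is exactly this argument in compressed form: since $\widetilde{\Hp}$ is discrete and $\UU$-small, the pointwise Kan extensions degenerate to a $\UU$-indexed product (which stays in $\SET{\SS}$ because reflective subuniverses are dependent exponential ideals) and a reflected $\UU$-indexed sum, with the displayed formulas matching yours. One small nit: your citation of \cref{thm:set-level-impred-to-reflective} is slightly misapplied, as that theorem assumes $\Sigma$-closure which this lemma does not hypothesize, but the direction you actually use (set-reflective implies closure under $\UU$-indexed products and identity types) is precisely the Rijke--Shulman--Spitters result you also cite, so nothing breaks.
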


  \begin{proof}
    As $\widetilde{\Hp}$ is both discrete and $\UU$-small, the Kan extensions exist because $\SET{\SS}$ has all $\UU$-small coproducts and products as a reflective subuniverse of $\UU$.
  \end{proof}

  The unenriched adjunctions of \cref{lem:unenriched-adjoints} can be computed on objects as follows, where $\Rxn : \UU\to\SET{\SS}$ is the assumed reflection:

  \iblock{
    \mrow{
      \exists\Sub{\BarHpProj}\,A\,U =
      \Rxn{
        \Sum{H:\widetilde{\Hp}}
        \Sum{f:\Hom{\OpCat{\WCat}}{\HpProj{H}}{U}}
        AH
      }
    }
    \mrow{
      \forall\Sub{\HpProj} A\,U =
      \Prod{H:\widetilde{\Hp}}
      \Prod{f : \Hom{\WCat}{U}{\HpProj H}}
      AH
    }
  }

  We draw attention to the fact that codomain of ${\exists}\Sub{\BarHpProj}$ is ${\BarPCat}$
  while the codomain of ${\forall}\Sub{\BarHpProj}$ is ${\PCat}$, hence the
  appearance of $\hom\Sub{\OpCat{\WCat}}$ in the former and $\hom\Sub{\WCat}$ in the latter.

  \paragraph*{Enrichments and enriched adjunctions}

  Let $\SS$ be a guarded universe closed under finite products.
  We now impose a common enrichment on $\PCat,\BarPCat,\NCat,\QCat$ so as to lift \cref{lem:unenriched-adjoints} to the enriched level, making all these categories \emph{locally indexed} in $\PCat$ in the sense of \cite{levy:2003}. Given a co-presheaf $\Gamma\in\PCat$, we will write $\pi_\Gamma : \tilde{\Gamma}\to \OpCat{\WCat}$ for the discrete cartesian fibration corresponding to $\Gamma$. With this in hand, we impose the following additional notations:

  \begin{multicols}{2}
    \iblock{
      \mrow{\PCat^\Gamma:\equiv \FUN{\OpCat{\tilde{\Gamma}}}{\SET{\SS}}}
      \mrow{\BarPCat^\Gamma :\equiv \FUN{\tilde{\Gamma}}{\SET{\SS}}}
      \mrow{\NCat^\Gamma :\equiv \FUN{\tilde{\Gamma}}{\DOM{\SS}}}
      \mrow{\QCat^\Gamma :\equiv \FUN{\OpCat{\tilde{\Gamma}}\times\Sub{\WCat}\widetilde{\Hp}}{\SET{\SS}}}
    }
  \end{multicols}

  Above, we note that $\PCat^\Gamma$ is equivalent to the slice $\PCat/\Gamma$; in the definition of $\QCat^\Gamma$, the expression $\OpCat{\tilde{\Gamma}}\times\Sub{\WCat}\widetilde{\Hp}$ refers to the pullback of the span $\brc{\OpCat{\tilde{\Gamma}}\xrightarrow{\OpCat{\pi_\Gamma}}\WCat\xleftarrow{\HpProj}\widetilde{\Hp}}$. We have the following base change functors for any co-presheaf $\Gamma\in\PCat$:
  \begin{multicols}{2}
    \iblock{
      \mrow{\Delta_\Gamma : \PCat\to \PCat^\Gamma}
      \mrow{\Delta_\Gamma A\,\prn{U,\gamma} :\equiv AU}
      \row
      \mrow{\Delta_\Gamma : \BarPCat\to \BarPCat^\Gamma}
      \mrow{\Delta_\Gamma X\,\prn{U,\gamma} :\equiv XU}
      \columnbreak
      \mrow{\Delta_\Gamma : \NCat \to\NCat^\Gamma}
      \mrow{\Delta_\Gamma\,X\,\prn{U,\Gamma} :\equiv XU}
      \row
      \mrow{\Delta_\Gamma : \QCat\to\QCat^\Gamma}
      \mrow{\Delta_\Gamma\,A\,\prn{U,\Gamma,H} :\equiv A\prn{U,H}}
    }
  \end{multicols}

  \begin{construction}[Enrichments]
    We extend $\PCat$, $\BarPCat$, $\NCat$, and $\QCat$ to $\HatPCat$-enriched categories $\Ul{\PCat}$, $\Ul{\BarPCat}$, $\Ul{\NCat}$, and $\Ul{\QCat}$ respectively, regarding $\HatPCat$ with its \emph{cartesian} monoidal structure:

    \begin{multicols}{2}
      \iblock{
        \mrow{
          \Hom{\Ul{\PCat}}{A}{B}_\Gamma :\equiv
          \Hom{\PCat^\Gamma}{\Delta\Sub{\Gamma}A}{\Delta\Sub{\Gamma}B}
        }
        \mrow{
          \Hom{\Ul\BarPCat}{X}{Y}_\Gamma :\equiv
          \Hom{\BarPCat^\Gamma}{\Delta_\Gamma X}{\Delta_\Gamma Y}
        }
        \mrow{
          \Hom{\Ul{\NCat}}{X}{Y}_\Gamma :\equiv
          \Hom{\NCat^\Gamma}{\Delta_\Gamma X}{\Delta_\Gamma Y}
        }
        \mrow{
          \Hom{\Ul{\QCat}}{A}{B}_\Gamma :\equiv
          \Hom{\QCat^\Gamma}{\Delta_\Gamma A}{\Delta_\Gamma B}
        }
      }
    \end{multicols}

  \end{construction}

  These enrichments agree with those given by Levy~\cite{levy:2003} in terms of dinatural transformations as one can see using the formula for a natural transformation as an end. The purpose of imposing these enrichments was to be able to state \cref{lem:enriched-adjoints,lem:enriched-lift} below.

  \begin{lemma}
    \label{lem:enriched-adjoints}
    Under the assumptions of \cref{lem:unenriched-adjoints}, the unenriched adjunctions $\Delta\Sub{\HpProj}\dashv\forall\Sub{\HpProj}$ and $\exists\Sub{\BarHpProj}\dashv\Delta\Sub{\BarHpProj}$ extend to $\HatPCat$-enriched adjunctions
    $\Ul{\Delta}\Sub{\HpProj}\dashv\Ul{\forall}\Sub{\HpProj}$ and $\Ul{\exists}\Sub{\BarHpProj}\dashv\Ul{\Delta}\Sub{\BarHpProj}$.
  \end{lemma}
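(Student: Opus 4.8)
The plan is to reduce the enriched statement to a reindexing-stable family of the unenriched adjunctions already in hand, using the standard calculus of $\HatPCat$-enriched (equivalently, locally $\PCat$-indexed) categories. Since the $\HatPCat$-enrichment of each of $\PCat$, $\BarPCat$, $\QCat$ is defined fibrewise by $\Hom{\Ul{\PCat}}{A}{B}_\Gamma :\equiv \Hom{\PCat^\Gamma}{\Delta_\Gamma A}{\Delta_\Gamma B}$ (and analogously for the others), an enriched adjunction unfolds to a family of hom-set bijections in the fibre categories $\PCat^\Gamma$, $\BarPCat^\Gamma$, $\QCat^\Gamma$, one for each $\Gamma\in\PCat$, natural in both functor arguments \emph{and} in the reindexing variable $\Gamma$. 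First I would construct these bijections one fibre at a time, and then verify their naturality in $\Gamma$.

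For the fibrewise bijections, I would note that every fibre is again a presheaf category over a $\UU$-small base: $\PCat^\Gamma = \FUN{\OpCat{\tilde{\Gamma}}}{\SET{\SS}}$ and $\QCat^\Gamma = \FUN{\OpCat{\tilde{\Gamma}}\times\Sub{\WCat}\widetilde{\Hp}}{\SET{\SS}}$ are related by base change along the projection $\OpCat{\tilde{\Gamma}}\times\Sub{\WCat}\widetilde{\Hp}\to\OpCat{\tilde{\Gamma}}$ obtained by pulling $\HpProj$ back along $\OpCat{\pi_\Gamma}$. Since this pullback is again discrete and $\UU$-small, the argument of \cref{lem:unenriched-adjoints}---that the reflective subuniverse $\SET{\SS}$ carries all $\UU$-small products and coproducts---produces the required left and right adjoints to the fibrewise base-change functors. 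Explicitly, they are computed by the very sum/product formulas displayed after \cref{lem:unenriched-adjoints}, now with the quantifiers ranging over the fibre $\tilde{\Gamma}$ rather than over $\WCat$; this gives the adjunction isomorphisms ``one $\Gamma$ at a time''.

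The crux is naturality in $\Gamma$, which I would discharge as a Beck--Chevalley condition. A morphism $\gamma : \Gamma'\to\Gamma$ in $\PCat$ induces a functor $\tilde{\gamma} : \tilde{\Gamma'}\to\tilde{\Gamma}$ of total categories; since the pullback bases of $\QCat^\Gamma$ and $\QCat^{\Gamma'}$ are both formed against the fixed projection $\HpProj$, pasting of pullback squares shows that reindexing along $\tilde{\gamma}$ downstairs and upstairs fits into a pullback square. Because those bases are discrete groupoids and $\UU$-small, the pointwise Kan-extension formulas are computed by (co)products over fibres that this pullback preserves, so restriction along $\tilde{\gamma}$ commutes with the adjoints up to the canonical mate. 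This is exactly the $\HatPCat$-naturality needed, and it upgrades $\Delta\Sub{\HpProj}\dashv\forall\Sub{\HpProj}$ and $\exists\Sub{\BarHpProj}\dashv\Delta\Sub{\BarHpProj}$ to the asserted enriched adjunctions $\Ul{\Delta}\Sub{\HpProj}\dashv\Ul{\forall}\Sub{\HpProj}$ and $\Ul{\exists}\Sub{\BarHpProj}\dashv\Ul{\Delta}\Sub{\BarHpProj}$.

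I expect the Beck--Chevalley verification to be the only point requiring genuine care; the coherence data that would otherwise accompany an enriched adjunction are here automatic, since all hom-objects are $0$-types and any identifications of presheaves that arise reduce to honest natural isomorphisms by the structure identity principle (\cref{lem:sip:presheaves}) together with path induction.
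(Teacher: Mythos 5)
The paper never writes a proof of \cref{lem:enriched-adjoints}---the lemma is simply asserted, the enrichments having been set up precisely so that it follows the locally $\PCat$-indexed pattern of Levy~\cite{levy:2003}---so your argument is being measured against an omitted proof rather than a differing one. On its merits, your reduction is the right one and I believe it goes through: unfolding the enrichment, an enriched adjunction here is exactly a family of hom-set bijections in the fibres $\PCat^\Gamma$, $\BarPCat^\Gamma$, $\QCat^\Gamma$, natural in the functor arguments and in $\Gamma$, and since all hom-objects are presheaves of \emph{sets}, no higher coherence data intervenes, as you correctly observe. Two points deserve more justification than you give them. First, you assert in passing that the pullback $\OpCat{\tilde{\Gamma}}\times\Sub{\WCat}\widetilde{\Hp}$ is ``again discrete''; this is true but not formal: every morphism of $\widetilde{\Hp}$ lies over an isomorphism of $\WCat$ (it is a groupoid), a morphism of the discrete fibration $\OpCat{\pi_\Gamma}$ lying over an isomorphism is itself invertible, so the pullback is a groupoid, and univalence of the two factors then makes it a 1-type---only with this in hand does the argument of \cref{lem:unenriched-adjoints} apply verbatim to produce the fibrewise adjoints. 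Second, your Beck--Chevalley step is stated rather loosely (``(co)products over fibres that this pullback preserves''); the crisp version is that for each object $\prn{U,\gamma}$ the comparison functor between the comma groupoid over $\prn{U,\gamma}$ upstairs and the comma groupoid over $U$ downstairs is an equivalence, because the $\tilde{\Gamma}$-component of an object upstairs is uniquely determined by transport along the underlying morphism of worlds (discreteness of the fibration again); since in this setting the pointwise Kan-extension formulas are $\Pi$- and $\Sigma$-types over these 1-types, the canonical mate is visibly an equivalence, and on the $\exists\Sub{\BarHpProj}$ side the reflection $\Rxn$ is applied to equivalent types, so invertibility survives reflection. With those two elaborations your proposal is a complete proof, and in fact supplies detail that the paper leaves entirely to the reader.
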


  \begin{restatable}{lemma}{LemEnrichedLift}\label{lem:enriched-lift}
    Let $\SS$ be a guarded universe closed under binary coproducts. Then the adjunction $\SetToDom\dashv\DomToSet : \DOM{\SS}\to \SET{\SS}$ between sets and guarded 0-domains (\cref{lem:dom-set-adjunction}) can be lifted pointwise to an enriched adjunction $\PshToDom{}\dashv\DomToPsh{} : \Ul{\NCat}\to \Ul{\BarPCat}$.
  \end{restatable}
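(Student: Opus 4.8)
The plan is to realize both adjoints as pointwise post-composition with the functors of \cref{lem:dom-set-adjunction} and then to verify that the resulting transposition is natural in the $\HatPCat$-enrichment. Concretely, I would take $\DomToPsh{}$ to be post-composition with the forgetful functor $\DomToSet$ and $\PshToDom{}$ to be post-composition with the free functor $\SetToDom$. Since an object of $\NCat = \FUN{\OpCat{\WCat}}{\DOM{\SS}}$ is exactly an object of $\BarPCat = \FUN{\OpCat{\WCat}}{\SET{\SS}}$ equipped with a fibrewise $\Ltr$-algebra structure, these operations evidently land in the correct categories and send natural transformations to natural transformations.

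First I would record the key commutation between reindexing and post-composition. Each $\Delta_\Gamma$ is given by precomposition with the projection out of $\tilde{\Gamma}$, whereas $\PshToDom{}$ and $\DomToPsh{}$ are given by post-composition with $\SetToDom$ and $\DomToSet$; precomposition and post-composition commute strictly. Writing $\PshToDom{}^\Gamma\dashv\DomToPsh{}^\Gamma : \NCat^\Gamma\to\BarPCat^\Gamma$ for the corresponding pair at the level of $\tilde{\Gamma}$-indexed presheaves, we therefore have $\Delta_\Gamma\circ\PshToDom{} = \PshToDom{}^\Gamma\circ\Delta_\Gamma$ and dually $\Delta_\Gamma\circ\DomToPsh{} = \DomToPsh{}^\Gamma\circ\Delta_\Gamma$.

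Next I would invoke the elementary fact that an adjunction lifts pointwise to functor categories: for each $\Gamma\in\PCat$ the adjunction $\SetToDom\dashv\DomToSet$ induces $\PshToDom{}^\Gamma\dashv\DomToPsh{}^\Gamma$ on $\FUN{\tilde{\Gamma}}{-}$, with unit and counit obtained by whiskering those of \cref{lem:dom-set-adjunction}. Combining this with the commutation above identifies the two enriched hom-objects componentwise:
\[
    \Hom{\Ul{\NCat}}{\PshToDom{X}}{Y}_\Gamma
    = \Hom{\NCat^\Gamma}{\PshToDom{}^\Gamma\Delta_\Gamma X}{\Delta_\Gamma Y}
    \cong \Hom{\BarPCat^\Gamma}{\Delta_\Gamma X}{\DomToPsh{}^\Gamma\Delta_\Gamma Y}
    = \Hom{\Ul{\BarPCat}}{X}{\DomToPsh{Y}}_\Gamma.
\]

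The hard part will be upgrading this family of componentwise bijections to an isomorphism in $\HatPCat$, \ie checking naturality in $\Gamma$. Here I would exploit that the transposition at every $\Gamma$ is induced by the \emph{same} unit and counit of \cref{lem:dom-set-adjunction}, independently of $\Gamma$: the structure maps presenting the enriched homs as objects of $\HatPCat$ act by reindexing along morphisms of $\PCat$, and reindexing commutes with whiskering by $\SetToDom$ and $\DomToSet$ because the adjunction data are fixed. Hence the transpositions assemble into a morphism of $\HatPCat$, and its inverse is produced identically, giving the required enriched natural isomorphism. The same commutation makes $\PshToDom{}$ and $\DomToPsh{}$ into $\HatPCat$-enriched functors, so the two pointwise liftings form the claimed enriched adjunction.
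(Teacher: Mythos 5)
Your proposal is correct and takes essentially the same route as the paper: both realize $\PshToDom{}$ and $\DomToPsh{}$ as pointwise post-composition with $\SetToDom$ and $\DomToSet$ and lift the adjunction data componentwise, the paper presenting this by defining the enriched unit and counit directly as the constant families $\prn{\Ul{\eta}_X^\Gamma{*}}\Sub{U,\gamma} :\equiv \eta\Sub{XU}$ and $\prn{\Ul{\epsilon}_X^\Gamma{*}}\Sub{U,\gamma} :\equiv \epsilon\Sub{XU}$, which is exactly your observation that the transposition at every $\Gamma$ is induced by the same fixed unit and counit and is therefore natural in $\Gamma$. Your hom-isomorphism formulation and the paper's unit/counit formulation are interchangeable presentations of the same pointwise-lifting argument.
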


  \gdef\PrfEnrichedLift{
    \begin{proof}[Construction]
      We compute the two adjoints pointwise below:

      \iblock{
        \mrow{\DomToPsh{}X\,U :\equiv \DomToSet\prn{XU}}
        \mrow{
          \DomToPsh{}_\Gamma\,\prn{\phi : \Hom{\NCat^\Gamma}{\Delta_\Gamma X}{\Delta_\Gamma Y}}\Sub{\prn{U,\gamma}} :\equiv
          \DomToSet\,{\phi\Sub{\prn{U,\gamma}}}
        }
        \row
        \mrow{\PshToDom{}X\,U :\equiv \SetToDom\prn{XU}}
        \mrow{
          \PshToDom{}_\Gamma\,\prn{\phi : \Hom{\NCat^\Gamma}{\Delta_\Gamma X}{\Delta_\Gamma Y}}\Sub{\prn{U,\gamma}} :\equiv
          \SetToDom\,{\phi\Sub{\prn{U,\gamma}}}
        }
      }

      The enriched unit $\Ul{\eta} : \Idn{\Ul{\BarPCat}}\to \DomToPsh{}\circ\PshToDom{}$ is defined in each component $X\in\Ul{\BarPCat}$ as a global element of the local hom presheaf $\Hom{\Ul{\BarPCat}}{X}{\DomToPsh{}\PshToDom{}X}$, using the unenriched unit $\eta : \Idn{\SET{\SS}}\to \DomToSet\circ\SetToDom$:
      \[
        \prn{\Ul{\eta}_X^\Gamma{*}}\Sub{U,\gamma} :\equiv
        \eta\Sub{XU}
      \]

      Likewise, the enriched counit $\Ul{\epsilon} : \PshToDom{}\circ\DomToPsh{}\to \Idn{\Ul{\BarPCat}}$ is defined in each component $X\in\Ul{\NCat}$ as a global element of the local hom presheaf $\Hom{\Ul{\NCat}}{X}{\PshToDom{}\DomToPsh{}X}$, using the unenriched counit $\epsilon : \SetToDom\DomToSet{X}\to X$:
      \[
        \prn{\Ul{\epsilon}_X^\Gamma{*}}\Sub{U,\gamma} :\equiv \epsilon\Sub{XU} \qedhere
      \]
    \end{proof}
  }

  \paragraph*{The call-by-push-value adjunction and resulting strong monad}
  Let $\SS$ be a small, set-reflective, guarded subuniverse of a guarded Martin-L\"of universe $\UU$ containing $\INJ$.
  We can compose the enriched adjunctions obtained in
  \cref{lem:enriched-adjoints} to obtain a single enriched adjunction
  $\F{}\dashv\U{} : \Ul{\NCat}\to \Ul{\PCat}$, setting
  $\F{}:\equiv \PshToDom{}\circ\Ul{\exists}\Sub{\BarHpProj}\circ\Ul{\Delta}\Sub{\HpProj}$ and
  $\U{}:\equiv\Ul{\forall}\Sub{\HpProj}\circ\Ul{\Delta}\Sub{\BarHpProj}\circ\DomToPsh{}$ as
  depicted in \cref{fig:cbpv-adjunction}. We will write $\TmRet$ for the unit of this
  adjunction. Our adjunction is an adjoint decomposition of Levy's possible worlds model of
  general storage~\cite{levy:2003}, with Levy's syntactic Kripke worlds replaced by
  \DefEmph{recursively defined univalent semantic worlds}, as can be seen from \cref{cmp:adjoints}
  below.

  \begin{figure}
    \[
      \begin{tikzpicture}[diagram]
        \node (0) {$\Ul{\BarPCat}$};
        \node[right = 2.5cm of 0] (1) {$\Ul{\QCat}$};
        \node[right = 2.5cm of 1] (2) {$\Ul{\PCat}$};
        \node[left = 2.5cm of 0] (-1) {$\Ul{\NCat}$};
        \draw[->,bend right=30] (0) to node[upright desc] {$\Ul{\Delta}\Sub{\BarHpProj}$} (1);
        \draw[->,bend right=30] (1) to node[upright desc] {$\Ul{\forall}\Sub{\HpProj}$} (2);
        \draw[->,bend right=30] (2) to node[upright desc] {$\Ul{\Delta}\Sub{\HpProj}$} (1);
        \draw[->,bend right=30] (1) to node[upright desc] {$\Ul{\exists}\Sub{\BarHpProj}$} (0);
        \draw[->,bend right=30] (-1) to node[upright desc] {$\DomToPsh{}$} (0);
        \draw[->,bend right=30] (0) to node[upright desc] {$\PshToDom{}$} (-1);
        \node[between = 0 and 1] {$\bot$};
        \node[between = 1 and 2] {$\bot$};
        \node[between = -1 and 0] {$\bot$};
        \draw[->,gray,bend right=50] (-1) to node[below] {$\U{}$} (2);
        \draw[->,gray,bend right=50] (2) to node[above] {$\F{}$} (-1);
      \end{tikzpicture}
    \]
    \caption{A diagram of $\HatPCat$-enriched adjunctions, together comprising a call-by-push-value adjunction $\F{}\dashv\U{} : \Ul{\NCat}\to \Ul{\PCat}$ resolving an enriched (and thus strong) monad $\TpT=\U\circ\F$ on $\Ul{\PCat}$.}\label{fig:cbpv-adjunction}
  \end{figure}

  \begin{computation}[Description of adjoints and monad]\label{cmp:adjoints}
    For the sake of concreteness, we compute the action of the left and right adjoints on objects as follows:
    \begin{align*}
      \F{}A\,U
       & =
      \SetToDom
      \Rxn{
        \Sum{H:\widetilde{\Hp}}
        \Sum{f:\Hom{\WCat}{U}{\HpProj{H}}}
        A\prn{\HpProj{H}}
      }
      \\
      \U{}X\,U
       & =
      \Prod{H:\widetilde{\Hp}}
      \Prod{f:\Hom{\WCat}{U}{\HpProj{H}}}
      \DomToSet\prn{
        X\prn{\HpProj{H}}
      }
    \end{align*}

    Composing the above, we describe the action of the monad $\TpT = \U\circ\F$ on objects:
    \[
      \TpT\,{A}\,U =
      \Prod{H:\widetilde{\Hp}}
      \Prod{f:\Hom{\WCat}{U}{\HpProj{H}}}
      \DomToSet\SetToDom
      \Rxn{
        \Sum{H':\widetilde{\Hp}}
        \Sum{f:\Hom{\WCat}{\HpProj{H}}{\HpProj{H'}}}
        A\prn{\HpProj{H'}}
      }
    \]
  \end{computation}

  \begin{lemma}
    Under the assumptions of \cref{lem:later-in-presheaves}, each $\U{X}$ is a guarded domain.
  \end{lemma}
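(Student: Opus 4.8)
The plan is to exhibit a $\Ltr$-algebra structure on $\U X$ inside $\PCat$. By \cref{lem:later-in-presheaves} the later modality on $\PCat = \FUN{\WCat}{\SET{\SS}}$ is computed pointwise, so it suffices to give, naturally in $U\in\WCat$, a map $\vartheta_U : \Ltr\prn{\U X\, U}\to \U X\, U$; since a guarded domain is merely a $\Ltr$-algebra with no further equations to verify, this is all the structure we need. I would work from the explicit description of $\U X$ recorded in \cref{cmp:adjoints},
\[
  \U X\, U = \Prod{H:\widetilde{\Hp}}\Prod{f:\Hom{\WCat}{U}{\HpProj{H}}}\DomToSet\prn{X\prn{\HpProj{H}}},
\]
noting that because $X\in\NCat$ is valued in guarded $0$-domains, each $X\prn{\HpProj{H}}$ comes equipped with a structure map $\vartheta_{X\prn{\HpProj{H}}} : \Ltr\DomToSet\prn{X\prn{\HpProj{H}}}\to \DomToSet\prn{X\prn{\HpProj{H}}}$.

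First I would push $\Ltr$ through the two dependent products. The index types $\widetilde{\Hp}$ and $\Hom{\WCat}{U}{\HpProj{H}}$ are ordinary (non-guarded) types, so the standard commutation of $\Ltr$ with dependent products (among the rules summarized in \cref{fig:later}) yields a canonical equivalence
\[
  \Ltr\prn{\U X\, U} \simeq \Prod{H:\widetilde{\Hp}}\Prod{f:\Hom{\WCat}{U}{\HpProj{H}}}\Ltr\DomToSet\prn{X\prn{\HpProj{H}}}.
\]
Composing this equivalence with the pointwise application of the structure maps $\vartheta_{X\prn{\HpProj{H}}}$ defines the desired $\vartheta_U$. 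These are large products over the universe-sized type $\widetilde{\Hp}$, but they remain in $\SET{\SS}$ and admit the later modality precisely because $\SS$ is impredicative.

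Finally I would check naturality in $U$. The co-presheaf action of $\U X$ along a morphism $g : U\to U'$ reindexes a section by precomposing its hom-component with $g$, leaving untouched both the index $H$ and the fibrewise structure maps $\vartheta_{X\prn{\HpProj{H}}}$, which depend only on $H$. Since the distribution equivalence above is itself natural with respect to this reindexing, the relevant naturality square commutes. I expect no serious obstacle here: the only substantive ingredient is the commutation of $\Ltr$ with the (large, dependent) products cutting out $\U X$, and the naturality verification is routine bookkeeping that follows because every component of the construction---the adjoint formula, the distribution of $\Ltr$, and the domain structure inherited from $X$---is manifestly natural in $U$.
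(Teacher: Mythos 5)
Your overall route---read off the pointwise formula for $\U{X}$ from \cref{cmp:adjoints}, equip each $\U{X}\,U$ with a $\Ltr$-algebra structure using the fibrewise structure maps $\vartheta_{X\prn{\HpProj{H}}}$ together with \cref{lem:later-in-presheaves}, and note that naturality is automatic because those structure maps do not depend on $U$---is exactly the argument the paper leaves implicit (the lemma is stated without proof), and your naturality bookkeeping is fine. But the middle step overclaims, and the justification offered for it is wrong. You assert a canonical \emph{equivalence} between $\Ltr\prn{\U{X}\,U}$ and $\Prod{H:\widetilde{\Hp}}\Prod{f:\Hom{\WCat}{U}{\HpProj{H}}}\Ltr\prn{\DomToSet\prn{X\prn{\HpProj{H}}}}$ on the grounds that the index types are ``ordinary (non-guarded) types''. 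Neither claim holds. First, $\widetilde{\Hp}$ is anything but non-guarded: it is carved out of $\vrt{\WCat}$, which is defined by a guarded fixed point, and a world assigns each location a \emph{later}-type; the hom-types of $\WCat$ likewise involve delayed data. Second, no rule in \cref{fig:later} (nor in Bizjak--M{\o}gelberg) makes $\Ltr$ commute with dependent products up to equivalence over a non-constant index type, and in fact it does not: in presheaves on $\omega$, take an index type inhabited at the first stage and empty at all later stages; then the product of the $\Ltr$-types trivializes to a point at every stage, while $\Ltr$ of the product does not. So the equivalence you invoke is unavailable in general, and in particular unjustified for the guarded, universe-sized type $\widetilde{\Hp}$.

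The repair is easy, because your construction only ever uses one direction of the purported equivalence, and that direction exists unconditionally: given $u : \Ltr\prn{\U{X}\,U}$, a delayed substitution yields $s\,H\,f$ under the later modality, so you may define $\vartheta_U\,u :\equiv \lambda H.\,\lambda f.\;\vartheta_{X\prn{\HpProj{H}}}\prn{\Next[s\gets u]\,s\,H\,f}$. Since a guarded domain is a bare $\Ltr$-algebra with no equations, this lax comparison map composed with the fibrewise $\vartheta$'s is all the lemma requires. With the equivalence replaced by this canonical map, the rest of your argument---set-reflectivity of $\SS$ in $\UU$ keeping the large products in $\SET{\SS}$, and naturality via precomposition of the hom-component along $g : U \to U'$---goes through as written and agrees with the intended proof.
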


\end{xsect}

    \begin{xsect}[sec:the-model]{The model of univalent reference types}
      We have now defined all the basic elements of our model.

\begin{xsect}[sec:model:recursion]{General recursion and stepping}
  Let $\SS$ be a small, set-reflective, guarded subuniverse of a guarded Martin-L\"of universe $\UU$ containing $\INJ$.
  Abstract steps in our model are encoded in terms of a global element $\TmStep:\TpT\,{\ObjOne}$, defined using the guarded domain structure of $\TpT\,\ObjOne$ as $\TmStep :\equiv \vartheta\Sub{\TpT\,\ObjOne}\prn{\Next\,\prn{\TmRet\,{*}}}$.

  \begin{lemma}\label{lem:step-bind}
    For any $u:\TpT\,{A}$, we have $\Con{step};u = \vartheta\Sub{\TpT\,A}\prn{\Next\,u}$.
  \end{lemma}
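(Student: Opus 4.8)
The plan is to unfold $\TmStep;u$ as a monadic bind and then exploit the fact that Kleisli extension for $\TpT$ is a homomorphism of $\Ltr$-algebras. Writing $k :\equiv \lambda\_.u : \ObjOne\to\TpT\,A$ and $k^{*} : \TpT\,\ObjOne\to\TpT\,A$ for its Kleisli extension, the abbreviation $\TmStep;u$ denotes exactly $k^{*}\,\TmStep$; and by definition $\TmStep :\equiv \vartheta\Sub{\TpT\,\ObjOne}\prn{\Next\,\prn{\TmRet\,{*}}}$. So the entire statement reduces to computing $k^{*}$ applied to $\vartheta\Sub{\TpT\,\ObjOne}\prn{\Next\,\prn{\TmRet\,{*}}}$.

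The key lemma I would isolate is that every such $k^{*}$ commutes with the guarded domain structure, i.e.\ $k^{*}\circ\vartheta\Sub{\TpT\,\ObjOne} = \vartheta\Sub{\TpT\,A}\circ\Ltr\prn{k^{*}}$. To prove it, recall that $k^{*} = \U{}\,\bar{k}$, where $\bar{k} : \F{}\,\ObjOne\to\F{}\,A$ is the adjoint transpose of $k$ under $\F{}\dashv\U{}$ (concretely $\bar{k} = \epsilon\Sub{\F{}A}\circ\F{}k$, so that $\U{}\,\bar{k} = \mu\circ\TpT\,k = k^{*}$). As a morphism of $\NCat = \FUN{\OpCat{\WCat}}{\DOM{\SS}}$, the transpose $\bar{k}$ is pointwise a morphism of guarded $0$-domains, hence commutes with $\vartheta$ in each component $\F{}\,\ObjOne\,\prn{\HpProj{H}}\to\F{}\,A\,\prn{\HpProj{H}}$. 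Since the guarded structure $\vartheta\Sub{\U{}X}$ is assembled componentwise from the $\vartheta\Sub{X\prn{\HpProj{H}}}$ along the product $\Prod{H:\widetilde{\Hp}}\Prod{f}\prn{-}$ that defines $\U{}$ in \cref{cmp:adjoints}---using that $\Ltr$ is left exact and so commutes with this product---applying $\U{}$ preserves the homomorphism property. Thus $k^{*} = \U{}\,\bar{k}$ is itself a $\Ltr$-algebra homomorphism.

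Granting the lemma, the calculation is immediate:
\begin{align*}
  \TmStep;u
  &= k^{*}\,\prn{\vartheta\Sub{\TpT\,\ObjOne}\prn{\Next\,\prn{\TmRet\,{*}}}}\\
  &= \vartheta\Sub{\TpT\,A}\prn{\Next[m\gets \Next\,\prn{\TmRet\,{*}}]\,k^{*}\,m}\\
  &= \vartheta\Sub{\TpT\,A}\prn{\Next\,\prn{k^{*}\,\prn{\TmRet\,{*}}}}\\
  &= \vartheta\Sub{\TpT\,A}\prn{\Next\,u},
\end{align*}
where the second equality is the homomorphism lemma, the third is the computation rule for $\Next$ under delayed substitution, and the last uses the monadic unit law $k^{*}\circ\TmRet = k$ together with $k\,{*} = u$.

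I expect the only genuine work to lie in the homomorphism lemma, and within it the bookkeeping that $\U{}$ transports the componentwise $\Ltr$-algebra structure correctly---this is precisely where left exactness of $\Ltr$ and the explicit product formula for $\U{}$ are needed; everything else is formal monad and algebra manipulation. A more pedestrian alternative would be to unfold $\F{}$, $\U{}$, and $\SetToDom$ entirely via \cref{cmp:adjoints,lem:dom-set-adjunction} and compute by hand, but the homomorphism argument is cleaner and avoids wrestling directly with the reflection into $\SET{\SS}$.
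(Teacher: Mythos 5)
Your proof is correct, but it takes a genuinely different route from the paper. The paper's proof is a one-line direct verification: it unfolds the $\Ltr$-algebra structure of $\TpT\,A$ pointwise in the model (the structure on $\U{}X$ being assembled componentwise from the guarded $0$-domain structures $\vartheta\Sub{X\prn{\HpProj{H}}}$ under the product in \cref{cmp:adjoints}) and computes. You instead isolate a structural lemma---that the Kleisli extension $k^{*} = \U{}\bar{k}$ is a homomorphism of $\Ltr$-algebras because $\bar{k}$ is a morphism of $\NCat = \FUN{\OpCat{\WCat}}{\DOM{\SS}}$ and $\U{}$ transports the pointwise guarded structure---and then the lemma follows by formal algebra together with the monadic unit law. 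Your approach buys reusability: the homomorphism lemma is exactly what underlies the standing assumption that $\TmStep$ lies in the center of $\TpT$, and it avoids unfolding $\SetToDom$ and the reflection $\Rxn$ explicitly; the paper's approach is shorter because every definition in sight is already pointwise. Two small points of bookkeeping you should note. First, the distribution of $\Ltr$ over the products $\Prod{H:\widetilde{\Hp}}\Prod{f}\prn{-}$ defining $\U{}$ needs only the canonical map $\Ltr\prn{\Prod{x:A}{Bx}}\to\Prod{x:A}{\Ltr\prn{Bx}}$, which is definable outright via delayed substitutions ($\lambda x.\Next\brk{g\gets\nu}\prn{g\,x}$); full preservation of the product, and hence left exactness, is not actually required for assembling $\vartheta\Sub{\U{}X}$. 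Second, since the lemma is stated for an arbitrary element $u:\TpT\,A$ in the internal language, $u$ may occur in a context $\Gamma$, so strictly speaking you need the \emph{enriched} form of your homomorphism lemma, with $\bar{k}$ a morphism in $\NCat^\Gamma$ rather than $\NCat$; the same componentwise argument applies verbatim, since morphisms of $\NCat^\Gamma$ are likewise pointwise maps of guarded $0$-domains, but it is worth saying.
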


  \begin{proof}
    By unfolding the definition of the $\Ltr$-algebra structure pointwise in the model.
  \end{proof}

  Using $\LtrFix$, we can define a monadic fixed point combinator satisfying the equation $\Con{rec}\,h\,a = \TmStep; h\,\prn{\Con{rec}\,h}\,a$.

  \iblock{
    \mrow{
      \Con{rec} : \prn{\prn{A \to \TpT\,B}\to A \to \TpT\,B}\to {A\to \TpT\,B}
    }
    \mrow{
      \Con{rec}\,h :\equiv
      \LtrFix\,\prn{
        \lambda f.\,\lambda x.\,
        \vartheta\Sub{\TpT\,B}\prn{
          \Next[g\leftarrow f]
          h\,g\,x
        }
      }
    }
  }

  \begin{lemma}
    We have $\Con{rec}\,h\,a = \TmStep; h\,\prn{\Con{rec}\,h}\,a$.
  \end{lemma}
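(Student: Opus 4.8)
The plan is to unfold the definition of $\Con{rec}$ using the fixed-point equation for $\LtrFix$ and then apply \cref{lem:step-bind} to recognize the guarded delay as a $\TmStep$ prefix. First I would apply $\Con{rec}\,h$ to $a$ and invoke the defining equation of the guarded fixed point combinator, namely $\LtrFix\,F = F\,\prn{\Next\,\prn{\LtrFix\,F}}$, where here $F :\equiv \lambda f.\,\lambda x.\,\vartheta\Sub{\TpT\,B}\prn{\Next[g\leftarrow f]\,h\,g\,x}$. Beta-reducing the two lambdas at arguments $\Next\,\prn{\Con{rec}\,h}$ and $a$ yields
\[
  \Con{rec}\,h\,a
  =
  \vartheta\Sub{\TpT\,B}\prn{
    \Next[g\leftarrow \Next\,\prn{\Con{rec}\,h}]
    h\,g\,a
  }.
\]
The next step is to simplify the delayed substitution: since $g$ is bound to $\Next\,\prn{\Con{rec}\,h}$, the delayed substitution rules (in particular the computation rule identifying $\Next[g \leftarrow \Next\,u]\,t$ with $\Next\,\brk{u/g}t$) collapse this to $\Next\,\prn{h\,\prn{\Con{rec}\,h}\,a}$. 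This gives
\[
  \Con{rec}\,h\,a
  =
  \vartheta\Sub{\TpT\,B}\prn{
    \Next\,\prn{h\,\prn{\Con{rec}\,h}\,a}
  }.
\]

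Finally I would apply \cref{lem:step-bind} with $u :\equiv h\,\prn{\Con{rec}\,h}\,a$, which states precisely that $\TmStep; u = \vartheta\Sub{\TpT\,B}\prn{\Next\,u}$. Reading this equation right-to-left identifies the right-hand side above with $\TmStep; h\,\prn{\Con{rec}\,h}\,a$, completing the derivation of $\Con{rec}\,h\,a = \TmStep; h\,\prn{\Con{rec}\,h}\,a$.

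I expect the main subtlety to be the manipulation of the delayed substitution $\Next[g \leftarrow f]$ under the binding $f = \Next\,\prn{\Con{rec}\,h}$: one must take care that the delayed-substitution equational rules (summarized alongside \cref{fig:later}) genuinely license replacing the bound variable $g$ by $\Con{rec}\,h$ so as to pull the $\Next$ outside the application. The remaining steps are mechanical applications of the $\LtrFix$ computation rule and \cref{lem:step-bind}.
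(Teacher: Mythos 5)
Your proof is correct and follows essentially the same route as the paper's own argument: unfold $\Con{rec}$, apply the $\LtrFix$ computation rule, collapse the delayed substitution $\Next[g\leftarrow \Next\,\prn{\Con{rec}\,h}]$ to $\Next\,\prn{h\,\prn{\Con{rec}\,h}\,a}$, and finish with \cref{lem:step-bind}. The subtlety you flag about the delayed-substitution rules is exactly the step the paper licenses by citing the rules summarized alongside \cref{fig:later}, so nothing is missing.
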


  \begin{proof}
    We compute as follows:
    \iblock{
      \mhang{
        \Con{rec}\,h\,a
      }{
        \row{by unfolding definitions}
        \mrow{
          {}\equiv
          \LtrFix\,\prn{
            \lambda f.\,
            \lambda x.\,
            \vartheta\Sub{\TpT\,B}\prn{
              \Next[g\leftarrow f]
              h\,g\,x
            }
          }\,a
        }
        \row{by $\LtrFix$ computation rule}
        \mrow{
          {}\equiv
          \vartheta\Sub{\TpT\,B}\prn{
            \Next[g\leftarrow \Next\,\prn{\Con{rec}\,h}]
            h\,g\,a
          }
        }
        \row{by rules of delayed substitutions}
        \mrow{
          {}\equiv
          \vartheta\Sub{\TpT\,B}\prn{
            \Next\,\prn{
              h\,\prn{\Con{rec}\,h}\,a
            }
          }
        }
        \row{by \cref{lem:step-bind}}
        \mrow{
          {}=
          \TmStep; h\,\prn{\Con{rec}\,h}\,a
        }
      }
    }

    We are done.
  \end{proof}

\end{xsect}

\begin{xsect}[sec:model:store]{Store operations: getting, setting, and allocation}

  Let $\SS$ be a small, $\Sigma$-closed, set-reflective, guarded subuniverse of a guarded Martin-L\"of universe $\UU$ containing $\INJ$. In this section, we explicitly construct the store operations in $\PCat$, which we summarize in \cref{fig:store-operations}.

  \NewDocumentCommand\V{s}{\IfBooleanT{#1}{\prn}{\HpProj{H}}}
  \NewDocumentCommand\AV{s}{\IfBooleanT{#1}{\prn}{A\V*}}
  \NewDocumentCommand\FAV{}{\F{}A\,\V*}
  \NewDocumentCommand\fl{s}{\IfBooleanT{#1}{\prn}{\vrt{f}\,\ell}}
  \NewDocumentCommand\Hfl{}{H\At\fl}
  \NewDocumentCommand\XV{}{X\V*}
  \NewDocumentCommand\fdef{}{\prn{f \equiv \prn{\vrt{f},\Refl{}}}}

  \begin{figure}

    \iblock{
      \mrow{
        \Con{get} : \TpRef\,{A}\to \TpT\,{A}
      }
      \mhang{
        \Con{get}\Sub{U}\prn{\ell : \vrt{U},\phi : \Ltr[X\gets \partial_U\ell] \ceils{X}_U = \ceils{A}_U}
        \,H\,\fdef
        :\equiv
      }{
        \mrow{
          \vartheta\Sub{\FAV}
          \Next[X\gets \partial_U\ell, \psi\gets \phi,x\gets\Hfl]
        }
        \mrow{
          \RxnUnit{
            \prn{
              H,
              \Idn{\V},
              \psi_*x
            }
          }
        }
      }

      \row

      \mrow{
        \Con{set} : \TpRef\,A\times A \to \TpT\,\ObjOne
      }
      \mhang{
        \Con{set}_U\prn{\prn{\ell,\phi}, a}\,H\,\fdef :\equiv
      }{
        \mrow{
          \Kwd{let}~H\Sub{a/\ell} :\equiv
          H\brk{
            \fl\hookrightarrow
            \Next[X\gets\partial_U\ell, \psi \gets \phi]
            \psi\Sup{-1}_*\prn{Af\,a}
          }
          ~\Kwd{in}
        }
        \mrow{
          \Con{now}\,\prn{
            \RxnUnit{
              \prn{
                H\Sub{a/\ell},
                \Idn{\V},
                {*}
              }
            }
          }
        }
      }

      \row

      \mrow{
        \TmAlloc : A\to \TpT\,\prn{\TpRef\,A}
      }
      \mhang{
        \TmAlloc_U\,a\,H\,\fdef :\equiv
      }{
        \mrow{
          \Kwd{let}~
          H_a :\equiv
          \begin{cases}
            \Con{inl}\,\ell \hookrightarrow
            \Next[X\gets \partial\Sub{\V}\ell, x\gets H\At\ell]\, X\,\Con{inl}\,x
            \\
            \Con{inr}\,{*} \hookrightarrow
            \Next\,\prn{
              A\,\Con{inr}\, a
            }
          \end{cases}
          \Kwd{in}
        }
        \mrow{
          \Con{now}\,\prn{
            \RxnUnit{
              \prn{
                H_a,
                \Con{inl},
                \prn{
                  \Con{inr}\,{*},
                  \Next\,\Refl{}
                }
              }
            }
          }
        }
      }
    }

    \caption{Summary of the store operations in $\PCat$ when $\SS$ is a small, $\Sigma$-closed, set-reflective, guarded subuniverse of a guarded Martin-L\"of universe $\UU$ containing $\INJ$.}
    \label{fig:store-operations}
  \end{figure}

  \begin{construction}[Detailed construction of the getter]
    For any $A\in\PCat$, we can interpret the getter as a natural transformation $\Con{get} : \TpRef\,{A}\to \TpT\,A$ in $\PCat$. Because the definition is a little subtle, we will do it step-by-step.

    \iblock{
      \mrow{
        \Con{get} : \TpRef\,{A}\to \TpT\,{A}
      }
      \mrow{
        \Con{get}\Sub{U}\prn{\ell : \vrt{U},\phi : \Ltr[X\gets \partial_U\ell] \ceils{X}_U = \ceils{A}_U}
        \,H\,f
        :\equiv
        \TypedHole{\FAV}
      }
    }

    We proceed by based path induction on the singleton $\prn{\partial_U, \partial_f : \partial\Sub{\V}\circ \vrt{f} = \partial_U}$, setting $U :\equiv \prn{\vrt{U}, \partial\Sub{\V}\circ \vrt{f}}$ and $f :\equiv \prn{\vrt{f},\Refl{}}$:

    \iblock{
      \mrow{
        \Con{get}\Sub{U}\prn{\ell : \vrt{U},\phi : \Ltr[X\gets \partial_U\ell] \ceils{X}_U = \ceils{A}_U}
        \,V\,\fdef
        :\equiv
        \TypedHole{\FAV}
      }
    }

    Next, we use the guarded domain structure of the goal:

    \iblock{
      \mhang{
        \Con{get}\Sub{U}\prn{\ell : \vrt{U},\phi : \Ltr[X\gets \partial_U\ell] \ceils{X}_U = \ceils{A}_U}
        \,H\,\fdef
        :\equiv
      }{
        \mrow{
          \vartheta\Sub{\FAV}
          \TypedHole{
            \Ltr\FAV
          }
        }
      }
    }

    Using the introduction rule for the later modality, we may unwrap the delayed identification $\phi$ to assume $\psi : \ceils{X}_U = \ceils{A}_U$, as well as the delayed element $\Hfl$ to assume $x : \XV$:

    \iblock{
      \mhang{
        \Con{get}\Sub{U}\prn{\ell : \vrt{U},\phi : \Ltr[X\gets \partial_U\ell] \ceils{X}_U = \ceils{A}_U}
        \,H\,\fdef
        :\equiv
      }{
        \mrow{
          \vartheta\Sub{\FAV}
          \Next[X\gets \partial_U\ell, \psi\gets \phi, x\gets \Hfl]
          \TypedHole{
            \FAV
          }
        }
      }
    }

    Applying the unit of the reflection $\Rxn : \UU\to\SET{\SS}$ and splitting the resulting goal, we have three holes:

    \iblock{
      \mhang{
        \Con{get}\Sub{U}\prn{\ell : \vrt{U},\phi : \Ltr[X\gets \partial_U\ell] \ceils{X}_U = \ceils{A}_U}
        \,H\,\fdef
        :\equiv
      }{
        \mrow{
          \vartheta\Sub{\FAV}
          \Next[X\gets \partial_U\ell, \psi\gets \phi, x\gets\Hfl]
        }
        \mrow{
          \RxnUnit{
            \prn{
              \TypedHole[?_0]{\widetilde{\Hp}},
              \TypedHole[?_1]{\Hom{\WCat}{\V}{\HpProj{?_0}}},
              \TypedHole[?_2]{
                A\prn{\HpProj{?_0}}
              }
            }
          }
        }
      }
    }

    A read-operation does not change the heap; therefore, we fill in the first hole with the existing heap $H$ and the second hole with the identity map.

    \iblock{
      \mhang{
        \Con{get}\Sub{U}\prn{\ell : \vrt{U},\phi : \Ltr[X\gets \partial_U\ell] \ceils{X}_U = \ceils{A}_U}
        \,H\,\fdef
        :\equiv
      }{
        \mrow{
          \vartheta\Sub{\FAV}
          \Next[X\gets \partial_U\ell, \psi\gets \phi,x\gets\Hfl]
        }
        \mrow{
          \RxnUnit{
            \prn{
              H,
              \Idn{\V},
              \TypedHole{
                \AV
              }
            }
          }
        }
      }
    }

    Recall that we have an identification $\psi : \ceils{X}_U = \ceils{A}_U$ in the type $\floors{\SET{\SS}}_U$ of co-presheaves on $U/\WCat$; transporting by this identification in the family $Z : \floors{\SET{\SS}}_U \vdash Z\,\prn{\V}\,f : \SET{\SS}$, we have a mapping from $\ceils{X}_U\,\prn{\V}\,f \equiv X\prn{\V}$ to $\ceils{A}_U\,\prn{\V}\,f \equiv A\prn{\V}$, which we use to fill the final hole:

    \iblock{
      \mhang{
        \Con{get}\Sub{U}\prn{\ell : \vrt{U},\phi : \Ltr[X\gets \partial_U\ell] \ceils{X}_U = \ceils{A}_U}
        \,H\,\fdef
        :\equiv
      }{
        \mrow{
          \vartheta\Sub{\FAV}
          \Next[X\gets \partial_U\ell, \psi\gets \phi,x\gets\Hfl]
        }
        \mrow{
          \RxnUnit{
            \prn{
              H,
              \Idn{\V},
              \psi_*x
            }
          }
        }
      }
    }

    This completes the definition of the getter.
  \end{construction}

  \begin{construction}[Detailed construction of the setter]
    For each $A\in\PCat$, we define the setter as a natural transformation $\TpRef\,A\times A \to \TpT\,\ObjOne$ in $\PCat$.

    \iblock{
      \mrow{
        \Con{set} : \TpRef\,A\times A \to \TpT\,\ObjOne
      }
      \mrow{
        \Con{set}_U\prn{\prn{\ell,\phi}, a}\,H\,\fdef :\equiv
        \TypedHole{\F{}\ObjOne\,\V*}
      }
    }

    We apply the unit $\Con{now}$ of the lifting monad, followed by the unit of the reflection $\Rxn : \UU\to \SET{\SS}$, and then split the goal:

    \iblock{
      \mrow{
        \Con{set} : \TpRef\,A\times A \to \TpT\,\ObjOne
      }
      \mhang{
        \Con{set}_U\prn{\prn{\ell,\phi}, a}\,H\,\fdef :\equiv
      }{
        \mrow{
          \Con{now}\,\prn{
            \RxnUnit{
              \prn{
                \TypedHole[?_0]{\widetilde{\Hp}},
                \TypedHole[?_1]{\Hom{\WCat}{\V}{\HpProj{?_0}}},
                {*}
              }
            }
          }
        }
      }
    }

    We want to replace the contents of $H$ at the location $\fl$ with the reindexed element $\Next\prn{Af\,a} : \Ltr{\AV}$; for this to make sense, we need to transport along the (delayed) identification $\phi : \Ltr[X\gets\partial_U\ell]\ceils{X}_U = \ceils{A}_U$. We define the updated heap as follows, noting that $\partial_U\ell \equiv \partial\Sub{\V}\fl$:

    \iblock{
      \mrow{
        ?_0 :\equiv
        H\brk{
          \fl\hookrightarrow
          \Next[X\gets\partial_U\ell, \psi \gets \phi]
          \psi\Sup{-1}_*\prn{Af\,a}
        }
      }
    }

    The updated heap can be so-defined because its set of locations is finite, and thus has decidable equality. Because the updated heap has the same underlying configuration, we can fill our remaining hole $?_1 :\equiv \Idn{\V}$, completing the definition of the setter as follows:

    \iblock{
      \mrow{
        \Con{set} : \TpRef\,A\times A \to \TpT\,\ObjOne
      }
      \mhang{
        \Con{set}_U\prn{\prn{\ell,\phi}, a}\,H\,\fdef :\equiv
      }{
        \mrow{
          \Kwd{let}~H\Sub{a/\ell} :\equiv
          H\brk{
            \fl\hookrightarrow
            \Next[X\gets\partial_U\ell, \psi \gets \phi]
            \psi\Sup{-1}_*\prn{Af\,a}
          }
          ~\Kwd{in}
        }
        \mrow{
          \Con{now}\,\prn{
            \RxnUnit{
              \prn{
                H\Sub{a/\ell},
                \Idn{\V},
                {*}
              }
            }
          }
        }
      }
    }
  \end{construction}

  \begin{construction}[The allocator]
    For each $A\in\PCat$, we define the allocator as a natural transformation $A\to \TpT\,\prn{\TpRef\,A}$ in $\PCat$.

    \iblock{
      \mrow{
        \TmAlloc : A\to \TpT\,\prn{\TpRef\,A}
      }
      \mhang{
        \TmAlloc_U\,a\,H\,\fdef :\equiv
      }{
        \mrow{
          \Kwd{let}~
          H_a :\equiv
          \begin{cases}
            \Con{inl}\,\ell \hookrightarrow
            \Next[X\gets \partial\Sub{\V}\ell, x\gets H\At\ell]\, X\,\Con{inl}\,x
            \\
            \Con{inr}\,{*} \hookrightarrow
            \Next\,\prn{
              A\,\Con{inr}\, a
            }
          \end{cases}
          \Kwd{in}
        }
        \mrow{
          \Con{now}\,\prn{
            \RxnUnit{
              \prn{
                H_a,
                \Con{inl},
                \prn{
                  \Con{inr}\,{*},
                  \Next\,\Refl{}
                }
              }
            }
          }
        }
      }
    }

    Above, we have defined a new heap $H_a$ whose underlying finite set of locations is the coproduct $\vrt{\V} + \ObjOne$, filling the new location with $a$ and return the pointer to this location.
  \end{construction}
\end{xsect}

\begin{xsect}{The main theorem}

  We now come to the main result of this paper, which obtains a model of univalent reference types from a suitably structured small set-reflective subuniverse.

  \begin{theorem}\label{thm:the-model}
    Let $\SS$ be a small, $\Sigma$-closed, set-reflective, guarded subuniverse of a guarded Martin-L\"of universe $\UU$ containing $\INJ$ such that $\SS$ is additionally closed under the type of natural numbers. Then there is a model of the monadic language from \cref{sec:language} satisfying the equational theory of univalent reference types (\cref{fig:rules,fig:univalent}), in which:
    \begin{enumerate}
      \item contexts, types, and terms are interpreted in the category $\PCat = \FUN{\WCat}{\SET{\SS}}$;
      \item the reference type connective is interpreted as in \cref{con:ref-copresheaf};
      \item the computational monad is given by $\TpT = \U\circ \F$ as defined in \cref{fig:cbpv-adjunction};
      \item general recursion and the store operations are interpreted as in \cref{sec:model:recursion,sec:model:store}.
    \end{enumerate}
  \end{theorem}

  \begin{proof}
    We note that $\PCat = \FUN{\WCat}{\SET{\SS}}$ is locally cartesian closed in spite of the fact that $\WCat$ is as large as $\SET{\SS}$ is: local cartesian closure nonetheless follows because $\SET{\SS}$ is reflective in $\UU$ and $\WCat$ is $\UU$-small.
    Everything except the two laws of \emph{univalent} reference types (\cref{fig:univalent}) follows in the same way as in the non-univalent model given by Sterling~\etal~\cite{sterling-gratzer-birkedal:2022}. The \textsc{allocation permutation} law holds under the interpretations given because the two heaps resulting from allocations in different orders are identified under univalence. The \textsc{representation independence} law holds for similar reasons, considering the effect of \emph{transporting} along an identification between equivalent heaps on the getter and the setter.
  \end{proof}

\end{xsect}

     \end{xsect}
  \end{xsect}

\end{xsect}

\NewDocumentCommand\iGHoTT{}{\textbf{iGHoTT}}
\NewDocumentCommand\iGDTT{}{\textbf{iGDTT}}
\NewDocumentCommand\ECat{}{\mathscr{E}}
\NewDocumentCommand\ASM{}{\Kwd{Asm}}
\NewDocumentCommand\Yo{}{\mathbf{y}}
\NewDocumentCommand\II{}{\mathbb{I}}
\NewDocumentCommand\GCASM{}{\Kwd{GCAsm}}
\NewDocumentCommand\CASM{}{\Kwd{CAsm}}
\NewDocumentCommand\GASM{}{\Kwd{GAsm}}

\begin{xsect}[sec:models-of-ighott]{Models of guarded HoTT with impredicative universes}

  Our main result (\cref{thm:the-model}) is contingent on there existing a model of guarded homotopy type theory in which there can be found a suitably small, $\Sigma$-closed, set-reflective, guarded subuniverse of a guarded Martin-L\"of universe containing $\INJ$. It is by no means obvious that such a model exists, but in this section we will provide some preliminary evidence.

  \begin{enumerate}

    \item Sterling, Gratzer, and Birkedal~\cite{sterling-gratzer-birkedal:2022} have constructed models of \DefEmph{impredicative guarded dependent type theory} (\iGDTT), a non-univalent version of our metalanguage.

    \item Awodey~\cite{awodey:2017:big-proof} has constructed a model of impredicative homotopy type theory in \DefEmph{cubical assemblies}, \ie internal cubical sets in the category of assemblies. Uemura~\cite{uemura:2019:types} subsequently described a variant of this model in the style of Orton and Pitts~\cite{orton-pitts:2016}.

    \item Birkedal~\etal~\cite{bbcgsv:2019,bbcgsv:2016} have constructed an Orton--Pitts model of \DefEmph{guarded cubical type theory} in presheaves on the product of a cube category with the ordinal $\omega$. This model was revisited in the context of multi-modal type theory by Aagaard~\etal~\cite{akgb:2022}.

  \end{enumerate}

  The methods of the papers above are essentially modular, and are furthermore not particularly sensitive to the choice of cube category or ordinal, so long as these can be defined in assemblies without resorting to quotients.

  \begin{conjecture}[Soundness]\label{conj:soundness}
    There is a non-trivial model of guarded homotopy type theory in guarded cubical assemblies in which there is a small, set-reflective, guarded Martin-L\"of subuniverse $\SS\subseteq\UU$ of a guarded Martin-L\"of universe $\UU$ containing $\INJ$.
  \end{conjecture}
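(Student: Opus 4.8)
The plan is to build the ambient category \GCASM{} of guarded cubical assemblies as presheaves on the product $\square\times\omega$ formed \emph{internally} to the category \ASM{} of assemblies --- equivalently, assemblies valued in guarded cubical sets --- and then to equip it with a model of guarded HoTT by \emph{modularly} amalgamating the three ingredients cited above. The cube category $\square$ supplies the cubical/univalent structure via the Orton--Pitts axiomatization~\cite{orton-pitts:2016}, the ordinal $\omega$ supplies the later modality $\Ltr$ in the manner of Birkedal~\etal~\cite{bbcgsv:2019,bbcgsv:2016,akgb:2022}, and working over \ASM{} supplies the impredicative universe of modest objects in the manner of Awodey~\cite{awodey:2017:big-proof} and Uemura~\cite{uemura:2019:types}. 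Crucially, neither factor of the site requires quotients to define, which is exactly the condition under which the methods of \opcit remain applicable; the novelty is to run all three constructions over the \emph{same} site simultaneously, rather than one at a time.

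First I would extract the univalent impredicative universe. Following Awodey~\cite{awodey:2017:big-proof} in the Orton--Pitts presentation of Uemura~\cite{uemura:2019:types}, the modest (uniform) fibrant objects organize into a small impredicative subuniverse $\SS$ of a larger guarded Martin-L\"of universe $\UU$, and the cubical structure makes both univalent. Here I would invoke \cref{thm:set-level-impred-to-reflective}: since $\SS$ is $\Sigma$-closed, set-impredicative, and closed under identity types --- properties that transport directly from the non-guarded cubical-assembly model --- it is automatically \emph{set-reflective}, so no bespoke reflection need be constructed by hand.

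Second I would install the guarded structure and the remaining closure conditions. The key point is that $\SS$ be closed under $\Ltr$: because $\Ltr$ is computed pointwise along the $\omega$-direction (as in \cref{lem:later-in-presheaves}) and does not increase realizer complexity, it preserves modesty, so $\Ltr{A}\in\SS$ whenever $A\in\SS$, and guarded fixed points of maps $\Ltr{A}\to A$ remain in $\SS$ by L\"ob induction. It then remains to check the side conditions demanded by \cref{thm:the-model,con:ref-copresheaf}: that $\INJ$ --- finite sets and embeddings, definable without quotients --- together with the natural numbers lie in $\SS$, and that the model is non-trivial, the last of which descends from non-triviality of \ASM{} itself.

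The hard part will be the \emph{simultaneous} coherence of univalence, guardedness, and impredicativity within the single subuniverse $\SS$. Awodey's impredicative encoding is engineered for plain HoTT, and its stability under the later modality is the delicate point: I must verify both that cubical fibrancy is preserved by $\Ltr$ (so that $\Ltr$-algebras remain fibrant) and that the composite of guarded fixed-point formation with the impredicative encoding lands back among the modest objects. I expect these interactions to be the genuine technical obstacles, which is precisely why the result is stated as a conjecture: each constituent construction is individually known, but their amalgamation over $\square\times\omega$ in \ASM{} has not yet been carried out in full detail, and it is conceivable that reconciling the impredicative quantifier with guarded recursion forces a finer analysis of which objects count as small.
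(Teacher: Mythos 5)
This statement is left as a \emph{conjecture} in the paper --- no proof is given, only the same preliminary evidence you cite (Sterling--Gratzer--Birkedal's \textbf{iGDTT} models, Awodey's and Uemura's impredicative cubical assemblies, and the guarded cubical models of Birkedal~\etal{} and Aagaard~\etal) together with the same remark that these methods are modular and insensitive to the choice of cube category and ordinal so long as both are definable in assemblies without quotients --- so your proposed route coincides with the paper's intended one. Your sketch is also appropriately honest about its status: the delicate interactions you isolate (preservation of fibrancy and modesty under $\Ltr$, and the coherence of the impredicative universe with guarded fixed points over the combined site) are exactly the unverified steps that keep \cref{conj:soundness} a conjecture rather than a theorem.
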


\end{xsect}
\begin{xsect}{Conclusions and future work}
  We have demonstrated the impact of a univalent metalanguage on the denotational semantics of higher-order store, extending the guarded global allocation model of Sterling~\etal~\cite{sterling-gratzer-birkedal:2022} with new program equivalences: invariance under permutation and representation independence in the heap. We believe that we have only scratched the surface of the potential for univalent denotational semantics in general, and univalent reference types in particular; we describe a few potential areas for further development beyond substantiating \cref{conj:soundness}.

  \begin{enumerate}
    \item Sterling~\etal{}~\cite{sterling-gratzer-birkedal:2022} have given non-univalent denotational semantics of \emph{polymorphic} $\lambda$-calculus with recursive types and general reference types. It is within reach to adapt this model to the univalent setting, obtaining even more program equivalences than before. In particular, many data abstraction theorems for existential packages that typically hold only up to observational equivalence are expected to hold on the nose.
    \item Our case study, an equation between two object-oriented counters, involves invariance of the heap under \emph{isomorphisms} between data representations --- whereas parametricity is often employed in cases of correspondences that are not isomorphisms. Angiuli~\etal{}~\cite{acmz:2021} have shown that many such applications of parametricity are nonetheless subsumed by univalence in the presence of quotient types, and thus many more observational equivalences can be replaced with honest equations in univalent denotational semantics. We are eager to put the wisdom of \opcit{} into practice in the context of imperative and object-oriented programming by incorporating quotients into our theory and model.
    \item Although our theory validates many more desirable equations than the global store theory of Sterling~\etal{}~\cite{sterling-gratzer-birkedal:2022}, we do not come close to modeling full local store: for example, two programs that allocate different numbers of cells cannot be equal. We hope that it will be possible to adapt the methods of Kammar~\etal{}~\cite{kammar-levy-moss-staton:2017} to the guarded, univalent, and impredicative setting in order to develop even more abstract models of mutable state.
    \item Our language does not allow for references to be directly compared (\emph{nominal
            references}) and no such equality testing function exists in our model. Prior
          work~\cite{tzevelekos:2009,murawski:2016} has given models of such references using
          the theory of nominal sets~\cite{gabbay-pitts:2002,pitts:2013}. We hope that these methods may
          be adapted to our model in order to support nominal univalent references.
  \end{enumerate}
\end{xsect}

\bibliography{refs,temp-refs}

\begin{thebibliography}{10}

\bibitem{akgb:2022}
Frederik~Lerbjerg Aagaard, Magnus~Baunsgaard Kristensen, Daniel Gratzer, and
  Lars Birkedal.
\newblock Unifying cubical and multimodal type theory.
\newblock Unpublished manuscript, 2022.
\newblock \href {https://doi.org/10.48550/ARXIV.2203.13000}
  {\path{doi:10.48550/ARXIV.2203.13000}}.

\bibitem{abramsky-honda-mccusker:1998}
S.~Abramsky, K.~Honda, and G.~McCusker.
\newblock A fully abstract game semantics for general references.
\newblock In {\em Proceedings of the 13th Annual IEEE Symposium on Logic in
  Computer Science}, pages 334--344, USA, 1998. IEEE Computer Society.
\newblock \href {https://doi.org/10.1109/LICS.1998.705669}
  {\path{doi:10.1109/LICS.1998.705669}}.

\bibitem{ahmed:2004}
Amal~Jamil Ahmed.
\newblock {\em Semantics of Types for Mutable State}.
\newblock PhD thesis, Princeton University, 2004.
\newblock URL: \url{http://www.ccs.neu.edu/home/amal/ahmedthesis.pdf}.

\bibitem{ahrens-et-al:2021:bicategories}
Benedikt Ahrens, Dan Frumin, Marco Maggesi, Niccol\`{o} Veltri, and Niels
  van~der Weide.
\newblock Bicategories in univalent foundations.
\newblock {\em Mathematical Structures in Computer Science}, 31(10):1232--1269,
  2021.
\newblock \href {https://doi.org/10.1017/S0960129522000032}
  {\path{doi:10.1017/S0960129522000032}}.

\bibitem{ahrens-lumsdaine:2019}
Benedikt Ahrens and Peter~LeFanu Lumsdaine.
\newblock Displayed categories.
\newblock {\em Logical Methods in Computer Science}, 15, March 2019.
\newblock URL: \url{https://lmcs.episciences.org/5252}, \href
  {https://doi.org/10.23638/LMCS-15(1:20)2019}
  {\path{doi:10.23638/LMCS-15(1:20)2019}}.

\bibitem{america-rutten:1987}
Pierre America and Jan J. M.~M. Rutten.
\newblock Solving reflexive domain equations in a category of complete metric
  spaces.
\newblock In {\em Proceedings of the 3rd Workshop on Mathematical Foundations
  of Programming Language Semantics}, pages 254--288, Berlin, Heidelberg, 1987.
  Springer-Verlag.

\bibitem{acmz:2021}
Carlo Angiuli, Evan Cavallo, Anders M\"{o}rtberg, and Max Zeuner.
\newblock Internalizing representation independence with univalence.
\newblock {\em Proceedings of the ACM on Programming Languages}, 5(POPL):1--30,
  January 2021.
\newblock \href {https://doi.org/10.1145/3434293} {\path{doi:10.1145/3434293}}.

\bibitem{appel-mcallester:2001}
Andrew~W. Appel and David McAllester.
\newblock An indexed model of recursive types for foundational proof-carrying
  code.
\newblock {\em ACM Transactions on Programming Languages and Systems},
  23(5):657--683, September 2001.
\newblock \href {https://doi.org/10.1145/504709.504712}
  {\path{doi:10.1145/504709.504712}}.

\bibitem{appel-mellies-richards-vouillon:2007}
Andrew~W. Appel, Paul-Andr\'{e} Melli\`{e}s, Christopher~D. Richards, and
  J\'{e}r\^{o}me Vouillon.
\newblock A very modal model of a modern, major, general type system.
\newblock In {\em Proceedings of the 34th Annual ACM SIGPLAN-SIGACT Symposium
  on Principles of Programming Languages}, pages 109--122, Nice, France, 2007.
  Association for Computing Machinery.

\bibitem{arnold-nivat:1980}
A.~Arnold and M.~Nivat.
\newblock Metric interpretations of infinite trees and semantics of non
  deterministic recursive programs.
\newblock {\em Theoretical Computer Science}, 11(2):181--205, 1980.
\newblock \href {https://doi.org/10.1016/0304-3975(80)90045-6}
  {\path{doi:10.1016/0304-3975(80)90045-6}}.

\bibitem{awodey:2017:big-proof}
Steve Awodey.
\newblock Impredicative encodings in {HoTT} (or: Towards a realizability
  $\infty$-topos).
\newblock Slides from a talk given the \emph{Big Proof} meeting, Isaac Newton
  Institute, Cambridge.
\newblock URL:
  \url{https://www.andrew.cmu.edu/user/awodey/talks/BigProofs.pdf}.

\bibitem{awodey:2022:universes}
Steve Awodey.
\newblock On {Hofmann--Streicher} universes.
\newblock Unpublished manuscript, 2022.
\newblock \href {https://doi.org/10.48550/ARXIV.2205.10917}
  {\path{doi:10.48550/ARXIV.2205.10917}}.

\bibitem{awodey-frey-speight:2018}
Steve Awodey, Jonas Frey, and Sam Speight.
\newblock Impredicative encodings of (higher) inductive types.
\newblock In {\em Proceedings of the 33rd Annual ACM/IEEE Symposium on Logic in
  Computer Science}, pages 76--85, Oxford, United Kingdom, 2018. Association
  for Computing Machinery.
\newblock \href {https://doi.org/10.1145/3209108.3209130}
  {\path{doi:10.1145/3209108.3209130}}.

\bibitem{bbcgsv:2016}
Lars Birkedal, Ale\v{s} Bizjak, Ranald Clouston, Hans~Bugge Grathwohl, Bas
  Spitters, and Andrea Vezzosi.
\newblock {Guarded Cubical Type Theory: Path Equality for Guarded Recursion}.
\newblock In Jean-Marc Talbot and Laurent Regnier, editors, {\em 25th EACSL
  Annual Conference on Computer Science Logic (CSL 2016)}, volume~62 of {\em
  Leibniz International Proceedings in Informatics (LIPIcs)}, pages
  23:1--23:17, Dagstuhl, Germany, 2016. Schloss Dagstuhl--Leibniz-Zentrum fuer
  Informatik.
\newblock \href {https://doi.org/10.4230/LIPIcs.CSL.2016.23}
  {\path{doi:10.4230/LIPIcs.CSL.2016.23}}.

\bibitem{bbcgsv:2019}
Lars Birkedal, Ale\v{s} Bizjak, Ranald Clouston, Hans~Bugge Grathwohl, Bas
  Spitters, and Andrea Vezzosi.
\newblock Guarded cubical type theory.
\newblock {\em Journal of Automated Reasoning}, 63(2):211--253, 2019.
\newblock \href {https://doi.org/10.1007/s10817-018-9471-7}
  {\path{doi:10.1007/s10817-018-9471-7}}.

\bibitem{birkedal-mogelberg:2013}
Lars Birkedal and Rasmus~Ejlers M{\o{}}gelberg.
\newblock Intensional type theory with guarded recursive types qua fixed points
  on universes.
\newblock In {\em Proceedings of the 2013 28th Annual ACM/IEEE Symposium on
  Logic in Computer Science}, pages 213--222, Washington, DC, USA, 2013. IEEE
  Computer Society.
\newblock \href {https://doi.org/10.1109/LICS.2013.27}
  {\path{doi:10.1109/LICS.2013.27}}.

\bibitem{bmss:2011}
Lars Birkedal, Rasmus~Ejlers M{\o{}}gelberg, Jan Schwinghammer, and Kristian
  St\o{}vring.
\newblock First steps in synthetic guarded domain theory: Step-indexing in the
  topos of trees.
\newblock In {\em Proceedings of the 2011 IEEE 26th Annual Symposium on Logic
  in Computer Science}, pages 55--64, Washington, DC, USA, 2011. IEEE Computer
  Society.
\newblock \href {https://arxiv.org/abs/1208.3596} {\path{arXiv:1208.3596}},
  \href {https://doi.org/10.1109/LICS.2011.16}
  {\path{doi:10.1109/LICS.2011.16}}.

\bibitem{birkedal-stovring-thamsborg:2010}
Lars Birkedal, Kristian St\o{}vring, and Jacob Thamsborg.
\newblock Realisability semantics of parametric polymorphism, general
  references and recursive types.
\newblock {\em Mathematical Structures in Computer Science}, 20(4):655--703,
  2010.
\newblock \href {https://doi.org/10.1017/S0960129510000162}
  {\path{doi:10.1017/S0960129510000162}}.

\bibitem{bgcmb:2016}
Ale\v{s} Bizjak, Hans~Bugge Grathwohl, Ranald Clouston, Rasmus~Ejlers
  M{\o{}}gelberg, and Lars Birkedal.
\newblock Guarded dependent type theory with coinductive types.
\newblock In Bart Jacobs and Christof L\"{o}ding, editors, {\em Foundations of
  Software Science and Computation Structures: 19th International Conference,
  FOSSACS 2016, Held as Part of the European Joint Conferences on Theory and
  Practice of Software, ETAPS 2016, Eindhoven, The Netherlands, April 2--8,
  2016, Proceedings}, pages 20--35, Berlin, Heidelberg, 2016. Springer Berlin
  Heidelberg.
\newblock \href {https://arxiv.org/abs/1601.01586} {\path{arXiv:1601.01586}},
  \href {https://doi.org/10.1007/978-3-662-49630-5_2}
  {\path{doi:10.1007/978-3-662-49630-5_2}}.

\bibitem{bizjak-mogelberg:2020}
Ale\v{s} Bizjak and Rasmus~Ejlers M\o{}gelberg.
\newblock Denotational semantics for guarded dependent type theory.
\newblock {\em Mathematical Structures in Computer Science}, 30(4):342--378,
  2020.
\newblock \href {https://doi.org/10.1017/S0960129520000080}
  {\path{doi:10.1017/S0960129520000080}}.

\bibitem{breugel-warmerdam:1994}
Franck Breugel and Jeroen Warmerdam.
\newblock Solving domain equations in a category of compact metric spaces.
\newblock Technical report, NLD, 1994.

\bibitem{carette-lemonnier-zamdzhiev:2023}
Titouan Carette, Louis Lemonnier, and Vladimir Zamdzhiev.
\newblock Central submonads and notions of computation: Soundness, completeness
  and internal languages.
\newblock In {\em 2023 38th Annual ACM/IEEE Symposium on Logic in Computer
  Science (LICS)}, pages 1--13, Los Alamitos, CA, USA, June 2023. IEEE Computer
  Society.
\newblock \href {https://doi.org/10.1109/LICS56636.2023.10175687}
  {\path{doi:10.1109/LICS56636.2023.10175687}}.

\bibitem{gabbay-pitts:2002}
Murdoch~J. Gabbay and Andrew~M. Pitts.
\newblock A new approach to abstract syntax with variable binding.
\newblock {\em Formal Aspects of Computing}, 13(3):341--363, July 2002.
\newblock \href {https://doi.org/10.1007/s001650200016}
  {\path{doi:10.1007/s001650200016}}.

\bibitem{hofmann-streicher:1997}
Martin Hofmann and Thomas Streicher.
\newblock Lifting {G}rothendieck universes.
\newblock Unpublished note, 1997.
\newblock URL:
  \url{https://www2.mathematik.tu-darmstadt.de/~streicher/NOTES/lift.pdf}.

\bibitem{hyland:1982}
J.~M.~E. Hyland.
\newblock The effective topos.
\newblock In A.~S. Troelstra and D.~Van Dalen, editors, {\em The {L.E.J.
  Brouwer} Centenary Symposium}, pages 165--216. North Holland Publishing
  Company, 1982.

\bibitem{hyland-robinson-rosolini:1990}
J.~M.~E. Hyland, E.~P. Robinson, and G.~Rosolini.
\newblock {The Discrete Objects in the Effective Topos}.
\newblock {\em Proceedings of the London Mathematical Society}, s3-60(1):1--36,
  January 1990.
\newblock \href {https://doi.org/10.1112/plms/s3-60.1.1}
  {\path{doi:10.1112/plms/s3-60.1.1}}.

\bibitem{kammar-levy-moss-staton:2017}
Ohad Kammar, Paul~B. Levy, Sean~K. Moss, and Sam Staton.
\newblock A monad for full ground reference cells.
\newblock In {\em Proceedings of the 32nd Annual ACM/IEEE Symposium on Logic in
  Computer Science}, Reykjavik, Iceland, June 2017. IEEE Press.
\newblock \href {https://doi.org/10.1109/LICS.2017.8005109}
  {\path{doi:10.1109/LICS.2017.8005109}}.

\bibitem{levy:2002}
Paul~Blain Levy.
\newblock Possible world semantics for general storage in call-by-value.
\newblock pages 232--246, September 2002.
\newblock \href {https://doi.org/10.1007/3-540-45793-3_16}
  {\path{doi:10.1007/3-540-45793-3_16}}.

\bibitem{levy:2003}
Paul~Blain Levy.
\newblock Adjunction models for call-by-push-value with stacks.
\newblock {\em Electronic Notes in Theoretical Computer Science}, 69:248--271,
  2003.
\newblock CTCS'02, Category Theory and Computer Science.
\newblock \href {https://doi.org/10.1016/S1571-0661(04)80568-1}
  {\path{doi:10.1016/S1571-0661(04)80568-1}}.

\bibitem{levy:2003:book}
Paul~Blain Levy.
\newblock {\em Call-by-Push-Value: A Functional/Imperative Synthesis}.
\newblock Kluwer, Semantic Structures in Computation, 2, January 2003.

\bibitem{mogelberg-paviotti:2016}
Rasmus~Ejlers M\o{}gelberg and Marco Paviotti.
\newblock Denotational semantics of recursive types in synthetic guarded domain
  theory.
\newblock In {\em Proceedings of the 31st Annual ACM/IEEE Symposium on Logic in
  Computer Science}, pages 317--326, New York, NY, USA, 2016. Association for
  Computing Machinery.
\newblock \href {https://doi.org/10.1145/2933575.2934516}
  {\path{doi:10.1145/2933575.2934516}}.

\bibitem{moggi:1991}
Eugenio Moggi.
\newblock Notions of computation and monads.
\newblock {\em Information and Computation}, 93(1):55--92, 1991.
\newblock Selections from 1989 IEEE Symposium on Logic in Computer Science.
\newblock \href {https://doi.org/10.1016/0890-5401(91)90052-4}
  {\path{doi:10.1016/0890-5401(91)90052-4}}.

\bibitem{murawski:2016}
Andrzej Murawski and Nikos Tzevelekos.
\newblock Nominal game semantics.
\newblock {\em Foundations and Trends in Programming Languages}, 2(4):191--269,
  2016.
\newblock \href {https://doi.org/10.1561/2500000017}
  {\path{doi:10.1561/2500000017}}.

\bibitem{orton-pitts:2016}
Ian Orton and Andrew~M. Pitts.
\newblock Axioms for modelling cubical type theory in a topos.
\newblock In Jean-Marc Talbot and Laurent Regnier, editors, {\em 25th EACSL
  Annual Conference on Computer Science Logic (CSL 2016)}, volume~62 of {\em
  Leibniz International Proceedings in Informatics (LIPIcs)}, pages
  24:1--24:19, Dagstuhl, Germany, 2016. Schloss Dagstuhl--Leibniz-Zentrum fuer
  Informatik.
\newblock \href {https://doi.org/10.4230/LIPIcs.CSL.2016.24}
  {\path{doi:10.4230/LIPIcs.CSL.2016.24}}.

\bibitem{paviotti:2016}
Marco Paviotti.
\newblock {\em Denotational semantics in Synthetic Guarded Domain Theory}.
\newblock PhD thesis, IT-Universitetet i K{\o{}}benhavn, Denmark, 2016.

\bibitem{paviotti-mogelberg-birkedal:2015}
Marco Paviotti, Rasmus~Ejlers M\o{}gelberg, and Lars Birkedal.
\newblock A model of {PCF} in {Guarded Type Theory}.
\newblock {\em Electronic Notes in Theoretical Computer Science},
  319(Supplement C):333--349, 2015.
\newblock The 31st Conference on the Mathematical Foundations of Programming
  Semantics (MFPS XXXI).
\newblock \href {https://doi.org/10.1016/j.entcs.2015.12.020}
  {\path{doi:10.1016/j.entcs.2015.12.020}}.

\bibitem{pitts:2013}
Andrew~M. Pitts.
\newblock {\em Nominal Sets: Names and Symmetry in Computer Science}.
\newblock Cambridge University Press, New York, NY, USA, 2013.

\bibitem{plotkin-power:2002}
Gordon~D. Plotkin and John Power.
\newblock Notions of computation determine monads.
\newblock In {\em Proceedings of the 5th International Conference on
  Foundations of Software Science and Computation Structures}, pages 342--356,
  Berlin, Heidelberg, 2002. Springer-Verlag.

\bibitem{reynolds:1977}
John~C. Reynolds.
\newblock {\em The Essence of Algol}, pages 67--88.
\newblock Birkh\"{a}user Boston, Boston, MA, 1997.
\newblock \href {https://doi.org/10.1007/978-1-4612-4118-8_4}
  {\path{doi:10.1007/978-1-4612-4118-8_4}}.

\bibitem{rijke:2022}
Egbert Rijke.
\newblock Introduction to homotopy type theory.
\newblock To appear, Cambridge University Press, 2022.
\newblock \href {https://doi.org/10.48550/ARXIV.2212.11082}
  {\path{doi:10.48550/ARXIV.2212.11082}}.

\bibitem{rijke-shulman-spitters:2020}
Egbert Rijke, Michael Shulman, and Bas Spitters.
\newblock Modalities in homotopy type theory.
\newblock {\em Logical Methods in Computer Science}, 16, January 2020.
\newblock \href {https://doi.org/10.23638/LMCS-16(1:2)2020}
  {\path{doi:10.23638/LMCS-16(1:2)2020}}.

\bibitem{staton:2010}
Sam Staton.
\newblock Completeness for algebraic theories of local state.
\newblock In Luke Ong, editor, {\em Foundations of Software Science and
  Computational Structures}, pages 48--63, Berlin, Heidelberg, 2010. Springer
  Berlin Heidelberg.

\bibitem{sterling-gratzer-birkedal:2022}
Jonathan Sterling, Daniel Gratzer, and Lars Birkedal.
\newblock Denotational semantics of general store and polymorphism.
\newblock Unpublished manuscript, July 2022.
\newblock \href {https://doi.org/10.48550/arXiv.2210.02169}
  {\path{doi:10.48550/arXiv.2210.02169}}.

\bibitem{tzevelekos:2009}
Nikos Tzevelekos.
\newblock {Full abstraction for nominal general references}.
\newblock {\em {Logical Methods in Computer Science}}, {Volume 5, Issue 3}, 09
  2009.
\newblock \href {https://doi.org/10.2168/LMCS-5(3:8)2009}
  {\path{doi:10.2168/LMCS-5(3:8)2009}}.

\bibitem{uemura:2019:types}
Taichi Uemura.
\newblock {Cubical Assemblies, a Univalent and Impredicative Universe and a
  Failure of Propositional Resizing}.
\newblock In Peter Dybjer, Jos\'{e} Esp\'irito~Santo, and Lu\'is Pinto,
  editors, {\em 24th International Conference on Types for Proofs and Programs
  (TYPES 2018)}, volume 130 of {\em Leibniz International Proceedings in
  Informatics (LIPIcs)}, pages 7:1--7:20, Dagstuhl, Germany, 2019. Schloss
  Dagstuhl -- Leibniz-Zentrum f\"{u}r Informatik.
\newblock \href {https://doi.org/10.4230/LIPIcs.TYPES.2018.7}
  {\path{doi:10.4230/LIPIcs.TYPES.2018.7}}.

\bibitem{hottbook}
The {Univalent Foundations Program}.
\newblock {\em Homotopy Type Theory: Univalent Foundations of Mathematics}.
\newblock \url{https://homotopytypetheory.org/book}, Institute for Advanced
  Study, 2013.

\end{thebibliography}

\end{document}